\lstdefinelanguage{program}{%
  keywords={%
    let,pass,function,%
    var,const,bool,int,void,atomic,%
    while,do,if,then,else,assume,assert,call,return,rule,forall,with,new,choose,skip,%
    task,async,yield,for,wait%
  },
  morecomment=[l]{//},
  morecomment=[s]{/*}{*/},
  morecomment=[n]{(*}{*)},
  mathescape=true,
  escapeinside=`',
}
\newcommand{\angles}[1]{\ensuremath{\left\langle {#1} \right\rangle}}
\let\oldae=\ae
\renewcommand{\ae}{\oldae\xspace}
\newcommand{\set}[1]{{\{ #1 \}}}
\newcommand{\tup}[1]{\angles{#1}}
\newcommand{\asgreekstyle}{\mathrm} 
\newcommand{\greek}[1]{\ensuremath{#1}\xspace}
\newcommand{\mathfnstyle}[1]{\ensuremath{\mathrm{#1}}}
\reservestyle{\mathfn}{\mathfnstyle}
\newcommand{\metalangkeywordstyle}[1]{\ensuremath{\mathsf{#1}}}
\reservestyle{\metalangkeyword}{\metalangkeywordstyle}
\newcommand{\semanticdomainstyle}[1]{%
  \ensuremath{\mathchoice%
    {\mbox{\normalfont\ensuremath{#1}}}%
    {\mbox{\normalfont\ensuremath{#1}}}%
    {\mbox{\normalfont\scriptsize\ensuremath{#1}}}%
    {\mbox{\normalfont\tiny\ensuremath{#1}}}}}
\reservestyle{\semanticdomain}{\semanticdomainstyle} 
\newcommand{\match}{\mathbin{\mapstochar\relbar\mapsfromchar}}
\newcommand{\send}[2]{\mathrm{send}_{#1}({#2})}
\newcommand{\rec}[2]{\mathrm{rec}_{#1}({#2})}
\newcommand{\senda}[1]{\mathrm{send}({#1})}
\newcommand{\reca}[1]{\mathrm{rec}({#1})}
\newcommand{\paral}{\,||\,}
\newcommand\asynchTr[1]{\mathrm{asTr}({#1})} 
\newcommand\asynchExec[1]{\mathrm{asEx}({#1})} 
\newcommand\synchTr[2]{\mathrm{sTr}_{#2}({#1})} 
\newcommand\synchExec[2]{\mathrm{sEx}_{#2}({#1})} 
\newcommand{\xRightarrow}[2][]{\ext@arrow 0359\Rightarrowfill@{#1}{#2}}
\newcommand\asynchSt[1]{\mathrm{asSt}({#1})} 
\title{On the Completeness of Verifying Message Passing Programs under Bounded Asynchrony\thanks{This work is supported in part by the European Research Council (ERC) under the European Union's Horizon 2020 research and innovation programme (grant agreement No 678177).}}
\author{Ahmed Bouajjani\inst{1} \and Constantin Enea\inst{1} \and Kailiang Ji\inst{1} \and Shaz Qadeer\inst{3} }
\institute{IRIF, University Paris Diderot \& CNRS, 
\email{\{abou,cenea,jkl\}@irif.fr},
\and Microsoft Research, 
\email{qadeer@microsoft.com}}
\begin{document}

\pagestyle{headings}
\bibliographystyle{splncs03}




\maketitle

\vspace{-5mm}
\begin{abstract}
We address the problem of verifying message passing programs, defined as a set of parallel processes communicating through unbounded FIFO buffers.
We introduce a bounded analysis that explores a special type of computations, called $k$-synchronous. These computations can be viewed as (unbounded) sequences of interaction phases, each phase allowing at most $k$ send actions (by different processes), followed by a sequence of receives corresponding to sends in the same phase. We give a procedure for deciding {\em $k$-synchronizability} of a program, i.e., whether every computation is equivalent (has the same happens-before relation) to one of its $k$-synchronous computations. We also show that 
reachability over $k$-synchronous computations and checking $k$-synchronizability are both PSPACE-complete. Furthermore, we introduce a class of programs called {\em flow-bounded} for which the problem of deciding whether there exists a $k>0$ for which the program is $k$-synchronizable, is decidable.
\end{abstract}



\section{Introduction}

Communication with asynchronous message passing is widely used in concurrent and distributed programs implementing various types of systems such as cache coherence protocols, communication protocols, protocols for distributed agreement, web applications, device drivers, etc. 
An asynchronous message passing program is built as a collection of processes running in parallel, communicating asynchronously by sending messages to each other via channels or message buffers. Messages sent to a given process are stored in its entry buffer, waiting for the moment they will be received by the process. In general, sending messages is not blocking for the sender process, which means that the message buffers are supposed to be of unbounded size. 

It is notorious that such programs are hard to get right. Indeed, asynchrony introduces a tremendous amount of new possible interleavings between actions of parallel processes, and makes very hard to apprehend the effect of all of their computations. 
Due to this complexity, expressing and verifying properties such as invariants for such systems is extremely hard. In particular, when buffer are ordered (FIFO buffers), the verification of invariants (or dually of reachability queries) is undecidable even when each of the processes is finite-state~\cite{DBLP:journals/jacm/BrandZ83}.

Therefore, an important issue is the design of verification approaches that avoid considering the full sets of computations 
to draw useful conclusions about the correctness of the considered programs. Several such approaches have been proposed including partial-order techniques, bounded analysis techniques, etc., e.g., \cite{DBLP:journals/tcs/BasuB16,DBLP:conf/oopsla/Desai0M14,DBLP:conf/tacas/BouajjaniE12,DBLP:conf/tacas/TorreMP08,DBLP:conf/popl/FlanaganG05}. Due to the hardness of the problem and its undecidability, these techniques have different limitations: either applicable only when buffers are bounded (e.g., partial-order techniques), or limited in scope, or do not provide any guarantees of termination or insight about completeness of the analysis.

In this paper, we propose a new approach for the analysis and verification of asynchronous message-passing programs with unbounded FIFO buffers, which provides a decision procedure for checking state reachability for a wide class of programs, and which is also applicable for bounded-analysis in the general case. 

We first define a bounding concept for prioritizing the enumeration of program behaviors. 
Our 
intuition comes from the conviction we have that the behaviors of well designed systems can be seen as successions of {\em bounded interaction phases}, each of them being a sequence of send actions (by different processes), followed by a sequence of receive actions (again by different processes) corresponding to send actions belonging to the same interaction phase. For instance, interaction phases corresponding to {\em rendezvous communications} are formed of a single send action followed immediately by its corresponding receive. More complex interactions are the result of exchanges of messages between processes. For instance two processes can send messages to each other, and therefore their interaction starts with two send actions (in any order), followed by the two corresponding receive actions (again in any order). This exchange schema can 
be generalized to any number of processes. 
We say that an interaction phase is {\em $k$-bounded}, for a given $k > 0$, if its number of send actions is less than or equal to $k$. For instance rendezvous interactions are precisely 1-bounded phases.  In general, we call {\em $k$-exchange} any $k$-bounded interaction phase. 
Given $k > 0$, we consider that a computation is {\em $k$-synchronous} if it is a succession of $k$-exchanges.
It can be seen that, in $k$-synchronous computations the sum of the sizes of all messages buffers is bounded by $k$. However, as it will be explained later, boundedness of the messages buffers does not guarantee that there is a $k$ such that all computations are $k$-synchronous. 

Then, we introduce a new bounded analysis which 
for a given $k$, considers only computations that are {\em equivalent} to $k$-synchronous computations. The equivalence relation we consider on computations is based on a notion of {\em trace} corresponding to a {\em happens-before} relation that captures the program order (the order of actions in the code of a process) and the precedence order between send actions and their corresponding receive actions. Two computations are considered to be equivalent if they have the same trace, i.e., they differ only in the order of causally independent actions. 
We show that this analysis is PSPACE-complete.
%

An important feature of our bounding concept is that it is possible to decide its completeness: For any given $k$, it is possible to decide whether every computation of the program (under the asynchronous semantics) is equivalent to (i.e., has the same trace as) a $k$-synchronous computation of that program. 
When this holds, we say that the program is {\em $k$-synchronizable}~\footnote{A different notion of synchronizability has been defined in~\cite{DBLP:journals/tcs/BasuB16} (see Section~\ref{sec:related}).}. Knowing that a program is $k$-synchronizable allows to conclude that an invariant holds for all computations of the program if no invariant violations have been found by its $k$-bounded exchange analysis. Notice that $k$-synchronizability of a program {\em does not} imply that all its behaviours use bounded buffers.
Consider for instance a program with two processes, a producer that consists of a loop of sends, and a consumer that consists of a loop of receives. Although there are computations with arbitrarily large configurations of the entry buffer of the consumer, the program is 1-synchronous because all its computations are equivalent to the computation where each message sent by the producer is immediately received by the consumer. 

Importantly, we show that checking $k$-synchronizability of a program can be reduced in linear time to checking state reachability under the $k$-synchronous semantics (i.e., without considering all the program computations), which implies that checking $k$-synchronizability is PSPACE-complete.
Thus, for $k$-synchronizable programs, it is possible to decide invariant properties without dealing with unbounded message buffers, and the overall complexity in this case is PSPACE. 

Then, a method for verifying asynchronous message passing programs can be defined, 
based on iterating $k$-bounded analyses with increasing value of $k$, starting from $k=1$. If for some $k$, a violation (i.e., reachability of an error state) is detected, then the iteration stops and the conclusion is that the program is not correct. On the other hand, if for some $k$, the program is shown to be $k$-synchronizable and no violations have been found, then again the iteration terminates and the conclusion is that the program is correct. 

However, it might be the case that the program is {\em not} $k$-synchronizable for any $k$. In this case, if the program is correct then the iteration above will not terminate. Thus, an important issue is to determine whether a program is {\em synchronizable}, i.e., {\em there exists a $k$ such that the program is $k$-synchronizable}. This problem is hard, and we believe that it is undecidable, but we do not have a formal proof. However, we are able to define a significant class of programs, including most examples in practice, for which this problem is decidable.  

We have confronted our theory to a set of nontrivial examples. Some of these programs are given as motivating examples in the next section. 
All examples we have found are actually synchronizable (even if all of them are not flow-bounded), which confirms our conviction that non-synchronizability should correspond to an ill-designed system (and therefore it should be reported as an anomaly). Therefore, our approach always terminates and is complete for these systems.

\section{Motivating examples}
\label{sec:motivation}

%
%
%
%
%
%

We provide in this section examples illustrating the relevance and the applicability of our approach. 
Figure \ref{fig:commit} shows a {\em commit protocol} allowing a client to update a memory that is replicated in two nodes. The access to these nodes is controlled by a manager. Figure \ref{fig:commit-exec} shows an execution of this protocol. This system is 1-synchronizable, i.e., every execution of this system is equivalent to one where only rendezvous communication is used. Intuitively, this holds because mutually interacting components are never in the situation where messages sent from one side to the other one are crossing messages sent in the other direction (i.e., the components are "talking" to each other at the same time). For instance, the execution in \ref{fig:commit-exec} is 1-synchronizable because its \emph{conflict graph} (shown in the same figure) is acyclic. Nodes in the conflict graph are matching send-receive pairs (numbered from 1 to 6 in the figure), and edges correspond to the program order between actions in these pairs. The conflict graph being acyclic means that matching pairs of send-receive actions are ``serializable'', which implies that it is equivalent to an execution where every send is immediately followed by the matching receive (as in rendezvous communication).

\begin{figure}[t]
\begin{center}
\includegraphics[width=8.5cm]{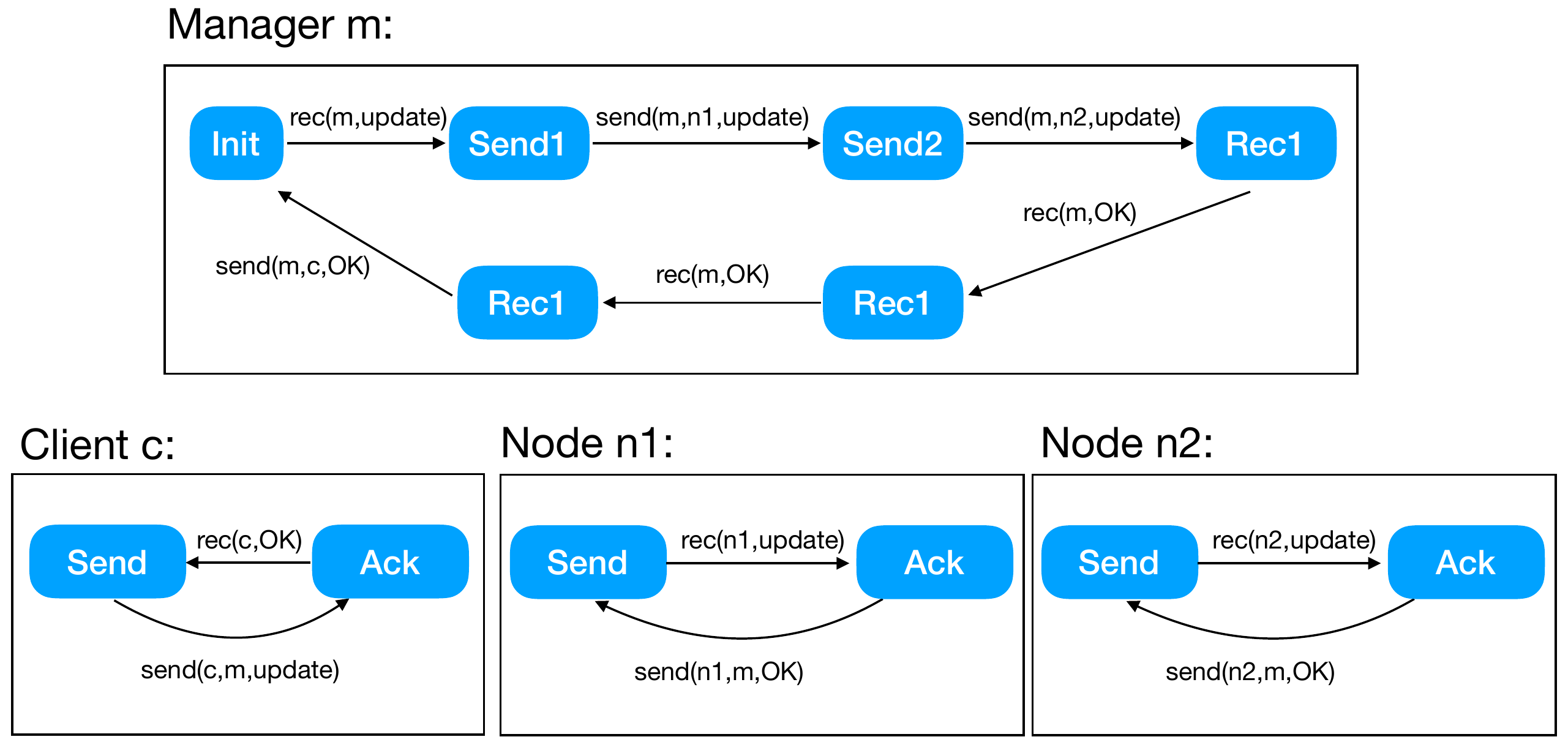}
\end{center}
\vspace{-5.5mm}
\caption{A distributed commit protocol. Each process is defined as a labeled transition system. Transitions are labeled by send and receive actions, e.g., $\senda{\sf{c},\sf{m},\sf{update}}$ is a send from the client $\sf{c}$ to the manager $\sf{m}$ with payload $\sf{update}$. Similarly, $\reca{c,\sf{OK}}$ denotes process $\sf{c}$ receiving a message 
$\sf{OK}$.}
\label{fig:commit}
\vspace{-3.5mm}
\end{figure}

\begin{figure}[t]
\begin{center}
\includegraphics[width=7cm]{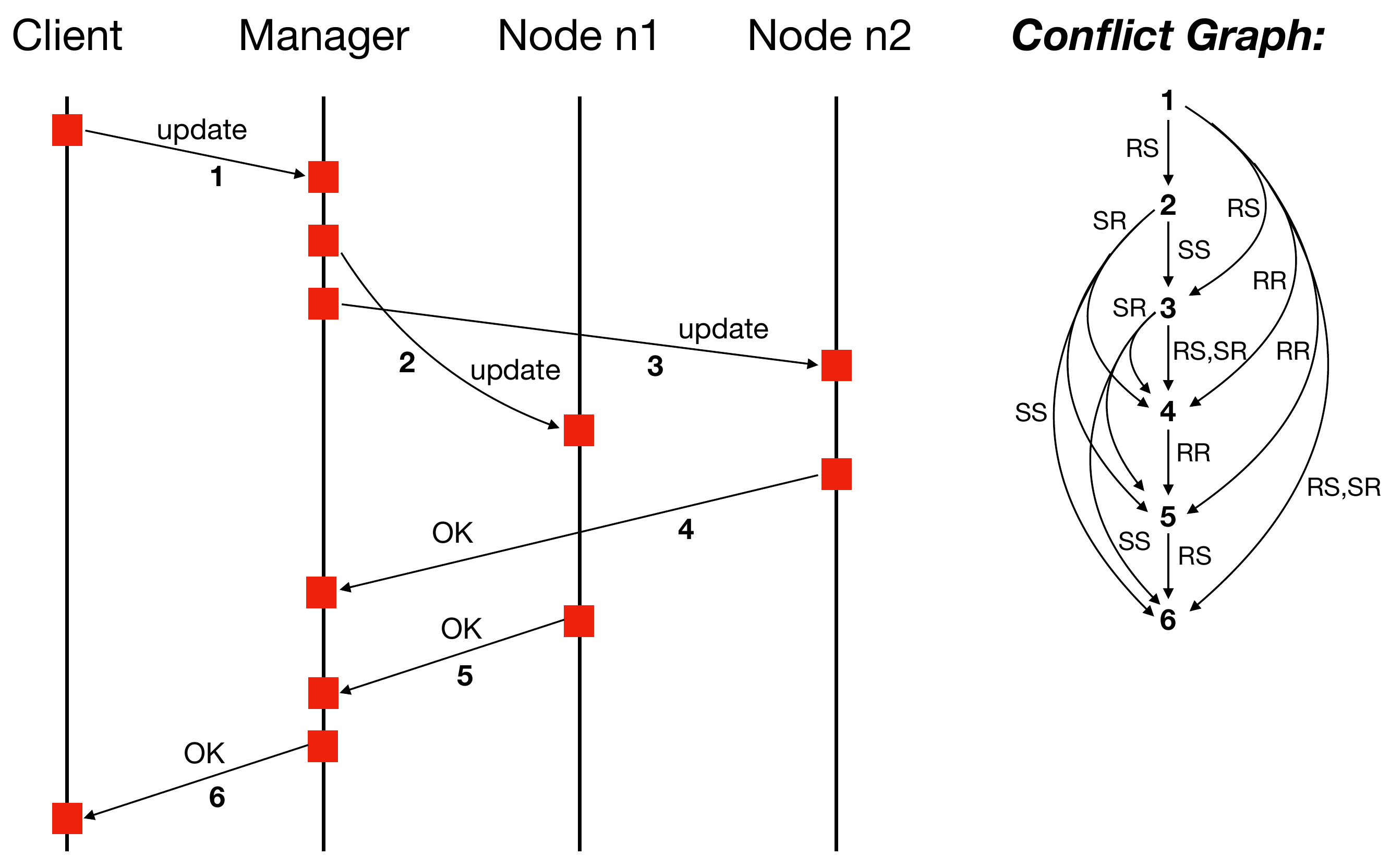}
\end{center}
\vspace{-5mm}
\caption{An execution of the distributed commit protocol and its conflict graph.}
\label{fig:commit-exec}
\vspace{-7mm}
\end{figure}

Although the message buffers are bounded in all the computations of the commit protocol, this is not true for every 1-synchronizable system.
There are asynchronous computations where buffers have an arbitrarily big size, which are equivalent to synchronous computations. This is illustrated for instance by a (family of) computations shown in Figure~\ref{fig:elevator-exec1} of the elevator system shown in Figure \ref{fig:elevator} (a simplified version of the system described in~\cite{DBLP:conf/pldi/DesaiGJQRZ13}).  In this execution, the user keeps sending requests for closing the door, which generates an unbounded sequence of messages in the entry buffer of the elevator process. However, these computations are synchronizable since they are equivalent to a synchronous computation where the elevator receives immediately every message sent by the user. This is witnessed by the acyclicity of the conflict graph of this computation (shown on the right of the same figure). It can be checked that the elevator system shown in Figure \ref{fig:elevator} is a 1-synchronous system (without the dashed edge). 

Consider now a slightly different version of the elevator system where the transition from {\sf Stopping2} to {\sf Opening2} is moved to target {\sf Opening1} instead of {\sf Opening2} (see the dashed transition in Figure \ref{fig:elevator}). It can be seen that this version has the same state space as the previous one. Indeed, moving that transition from {\sf Stopping2} to {\sf Opening1} gives the possibility to {\sf Elevator} to send a message open to {\sf Door}, but the latter can only be between {\sf StopDoor} and {\sf ResetDoor} at this point, and therefore it can (maybe after sending {\sf doorStoped} and {\sf doorOpened}) receive at state {\sf ResetDoor} the message {\sf open} and stay in the same state. However, this version of the system is not 1-synchronizable as it is shown in Figure \ref{fig:elevator-exec2}: Suppose that {\sf Door} is at state {\sf StopDoor}, and that {\sf Elevator} is at state {\sf Stopping2}. Then, {\sf Door} can send a message {\sf doorStoped} and move to the state {\sf OpenDoor}. Next, {\sf Elevator} can receive that message and move to state {\sf Opening1}. At this point, {\sf Elevator} and {\sf Door} can only exchange messages: message {\sf doorOpened} from {\sf Door} to {\sf Elevator} and message {\sf open} from {\sf Elevator} to {\sf Door}. The conflict graph of this execution, shown on the right of Figure \ref{fig:elevator-exec2}, contains a cycle of size 2 between the two matching pairs of send-receive actions involved in the exchange interaction. 

\begin{figure}[t]
\includegraphics[width=12cm]{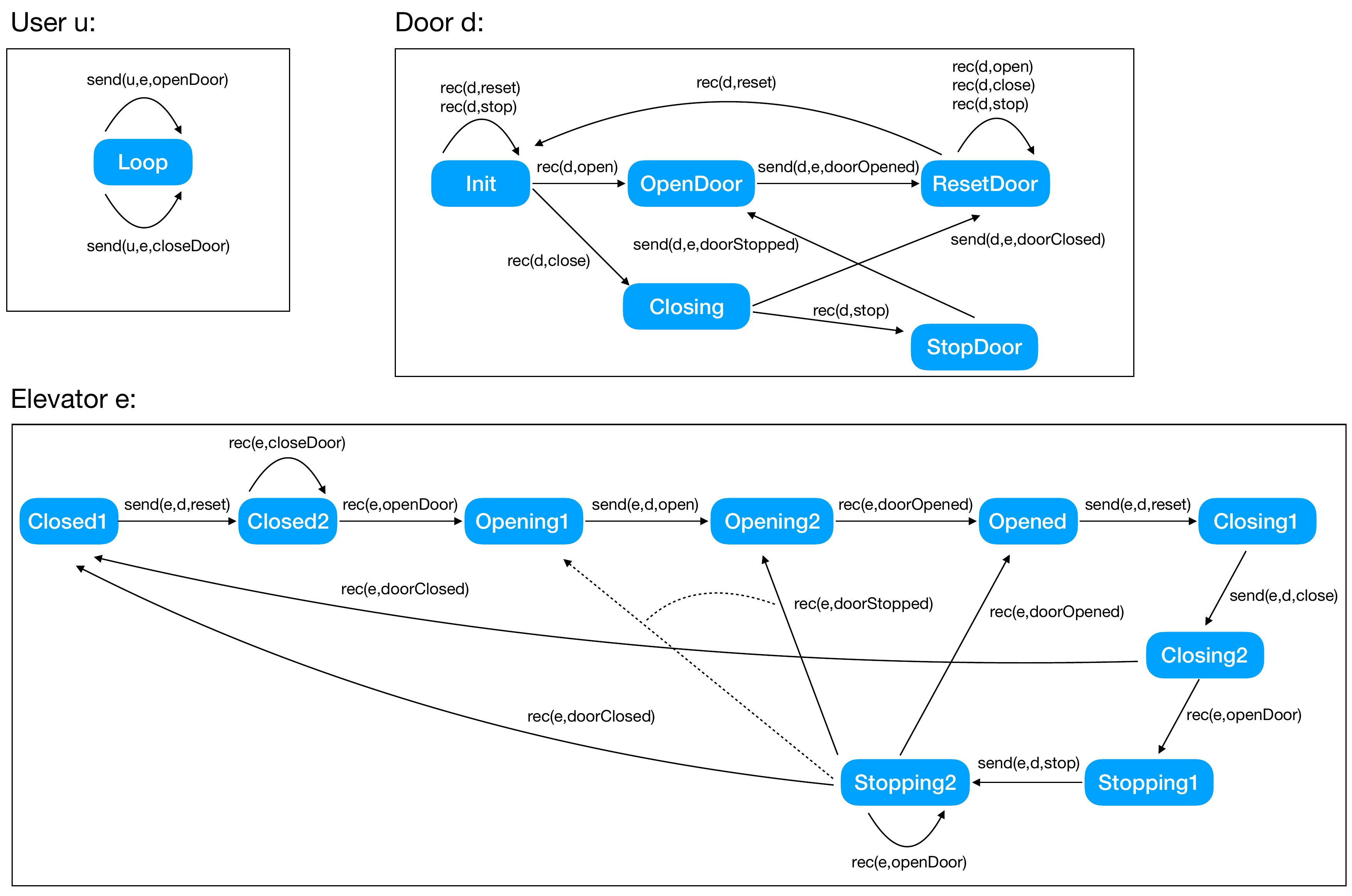}
\vspace{-3mm}
\caption{The Elevator example}
\label{fig:elevator}
\vspace{-5mm}
\end{figure}

\begin{figure}[t]
\begin{subfigure}[t]{6cm}
\includegraphics[width=6cm]{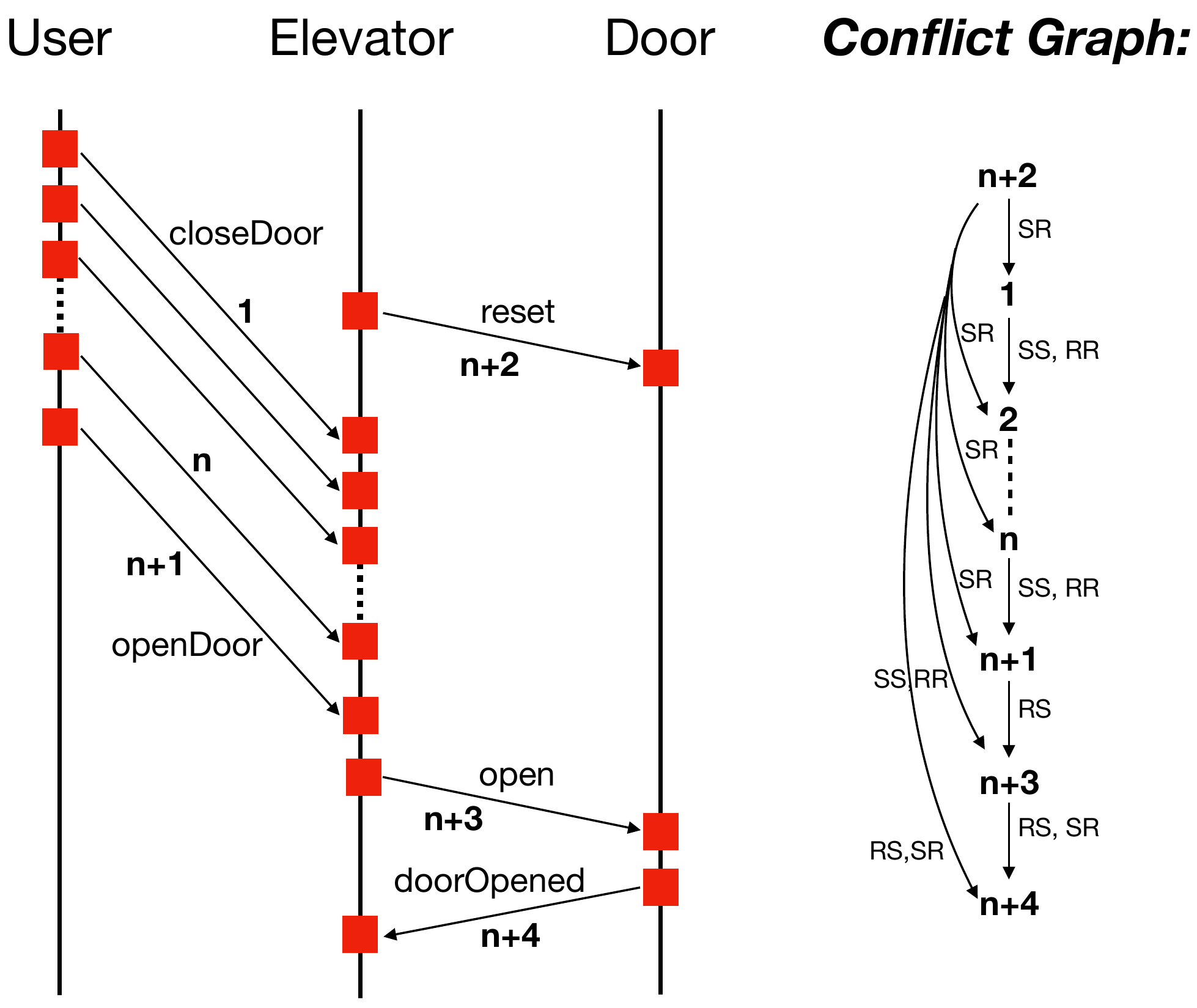}
\vspace{-3mm}
\caption{A $1$-synchronizable execution.}
\label{fig:elevator-exec1}
\end{subfigure}
\hspace{1cm}
\begin{subfigure}[t]{5cm}
\includegraphics[width=5cm]{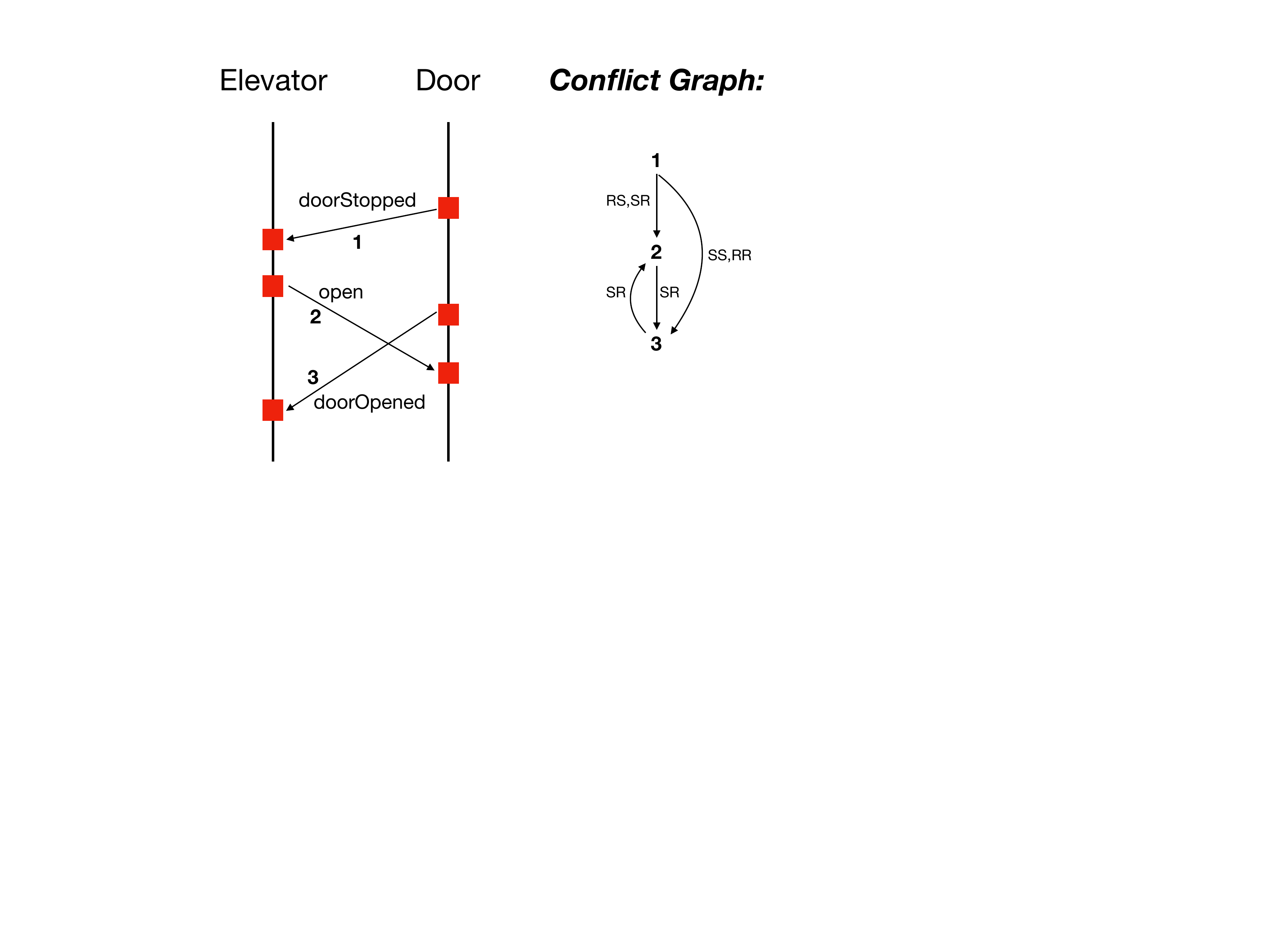}
\vspace{-3mm}
\caption{A computation with a 2-exchange.}
\label{fig:elevator-exec2}
\end{subfigure}
\vspace{-3mm}
\caption{Executions of the Elevator.}
\label{fig:elevator-exec}
\vspace{-6mm}
\end{figure}

\section{Message passing systems}\label{sec:prelims}


We define a message passing system as the composition of a set of processes that exchange messages, which
can be stored in FIFO buffers before being received. Each process is described as a state
machine that evolves by executing send or receive actions.
An execution of such a system can be represented abstractly
using a partially-ordered set of events, called a \emph{trace}. The partial order in a trace represents 
the causal relation between events. We show that these systems satisfy \emph{causal delivery}, i.e., 
the order in which messages are received by a process is
consistent with the causal relation between the corresponding sendings.



We fix sets $\<Pids>$ and $\<Val>$ of process ids and message payloads, 
and sets 
$S = \set{\senda{p,q,v}: p,q\in\<Pids>, v\in \<Val>}$ and $R = \set{\reca{q,v}: q\in\<Pids>, v\in \<Val>}$
of \emph{send actions} and \emph{receive actions}. 
Each send $\senda{p,q,v}$ combines two process ids $p, q$ denoting 
the sender and the receiver of the message, respectively, and a message payload $v$. Receive actions
specify the process $q$ receiving the message, and the message payload $v$. 
The process executing an action $a\in S\cup R$ is denoted $\<proc>(a)$, i.e.,
$\<proc>(a)=p$ for all $a=\senda{p,q,v}$ or $a=\reca{p,v}$,
and the destination $q$ of a send $s=\senda{p,q,v}\in S$ is denoted $\<dest>(s)$.
The set of send, resp., receive, actions $a$ of process $p$, i.e., with $\<proc>(a)=p$, is denoted by $S_p$, resp., $R_p$.


%

A \emph{message passing system} is a tuple $\mathcal{S}=((\<Lsts>_p,\delta_p,l_p^0)\mid p\in\<Pids>)$ 
where $\<Lsts>_p$ is the set of local states of process $p$,
$\delta_p\subseteq \<Lsts>\times (S_p\cup R_p)\times \<Lsts>$ is a transition relation describing the 
evolution of process $p$, and $l^0_p$ is the initial state of process $p$. Examples of message passing systems can be found in Figure~\ref{fig:commit} and Figure~\ref{fig:elevator}.




We fix an arbitrary set $\<Mids>$ of message identifiers, and the sets 
$S_{id} = \set{s_i: s\in S, i\in \<Mids>}$ and $R_{id} = \set{r_i: r\in R, i\in \<Mids>}$
of indexed actions.
Message identifiers are used to pair send and receive actions.
We denote the message id of an indexed send/receive action $a$ by $\<msg>(a)$.
Indexed send and receive actions $s\in S_{id}$ and $r\in R_{id}$ are \emph{matching}, 
written $s\match r$, when $\<msg>(s)=\<msg>(r)$.

An execution of a system $\mathcal{S}$ under the asynchronous semantics is a sequence of indexed actions that is obtained as an interleaving of processes in $\mathcal{S}$ (see Appendix~\ref{asec:semantics} for a formal definition). Every send action enqueues the message into the destination's buffer, and every receive action dequeues a message from the corresponding buffer.
Let $\asynchExec{\mathcal{S}}$ denote the set of these executions.
Given an execution $e$, a send action $s$ in $e$ is called an \emph{unmatched send} when $e$ contains no receive action $r$ such that $s\match r$. An execution $e$ is called \emph{matched} when it contains no unmatched send.


\smallskip
\noindent
{\bf Traces.}
Executions are represented using traces which are sets of indexed actions together with a \emph{program order} relating every two actions of the same process and a \emph{source} relation relating a send with the matching receive (if any).

A trace is a tuple $t=(A,po,src)$ where $A\subseteq S_{id}\cup R_{id}$, $po\subseteq A^2$ defines a total order between actions of the same process, i.e., for every $p\in\<Pids>$, $po\cap \{a: a\in A\mbox{ and }\<proc>(a)=p\}^2$ is a total order, and $src\subseteq S_{id}\times R_{id}$ is a relation s.t. $src(a,a')$ iff $a\match a'$.
The \emph{trace} $tr(e)$ of an execution $e$ is the tuple $(A,po,src)$ where $A$ is the set of all actions in $e$, $po(a,a')$ iff $\<proc>(a)=\<proc>(a')$ and $a$ occurs before $a'$ in $e$, and $src(a,a')$ iff $a\match a'$. Examples of traces can be found in Figure~\ref{fig:commit-exec} and Figure~\ref{fig:elevator-exec}.
The union of $po$ and $src$ is acyclic.
%
%
%
Let $\asynchTr{\mathcal{S}}=\set{tr(e):e\in \asynchExec{\mathcal{S}}}$ be the set of traces of $\mathcal{S}$ under the asynchronous semantics.

Traces abstract away the order of non-causally related actions, e.g., two sends of different processes that could be executed in any order. 
%
Two executions have the same trace when they only differ in the order between such actions.
Formally, given an execution $e=e_1\cdot a\cdot a'\cdot e_2$ with $tr(e)=(A,po,src)$, where $e_1, e_2\in (S_{id}\cup R_{id})^*$ and $a,a'\in S_{id}\cup R_{id}$, we say that $e'=e_1\cdot a'\cdot a\cdot e_2$ is derived from $e$ by a \emph{valid swap} iff $(a,a')\not\in po\cup src$. A permutation $e'$ of an execution $e$ is \emph{conflict-preserving} when $e'$ can be derived from $e$ through a sequence of valid swaps. 
For simplicity, whenever we use the term permutation we mean conflict-preserving permutation.
For instance, a permutation of 
$\send{1}{p_1,q,\_}\ 
\send{2}{p_2,q,\_}\ 
\rec{1}{q,\_}\ 
\rec{2}{q,\_}$
is 
$\send{1}{p_1,q,\_}\ 
\rec{1}{q,\_}\ 
\send{2}{p_2,q,\_}\ 
\rec{2}{q,\_}$
and a permutation of the execution
$\send{1}{p_1,q_1,\_}\ 
\send{2}{p_2,q_2,\_}\ 
\rec{2}{q_2,\_}\ 
\rec{1}{q_1,\_} $
is
$\send{1}{p_1,q_1,\_}\ $
$\rec{1}{q_1,\_}\ 
\send{2}{p_2,q_2,\_}\ 
\rec{2}{q_2,\_}$.


A direct consequence of the definitions is that the set of executions having the same trace are permutations of one another. Also, a system $\mathcal{S}$ cannot distinguish between permutations or equivalently, executions having the same trace.



%

\smallskip
\noindent
{\bf Causal Delivery.}
The asynchronous semantics ensures a property known as \emph{causal delivery}, which intuitively, says that the order in which messages are received by a process $q$ is consistent with the ``causal'' relation between them. Two messages are causally related if for instance, they were sent by the same process $p$ or one of the messages was sent by a process $p$ after the other one was received by the same process $p$. This property is ensured by the fact that the message buffers have a FIFO semantics and a sent message is instantaneously enqueued in the destination's buffer. For instance, the trace (execution) on the left of Figure~\ref{fig:ex-causal-delivery} satisfies causal delivery. In particular, the messages $v1$ and $v3$ are causally related, and they are received in the same order by $q2$. On the right of Figure~\ref{fig:ex-causal-delivery}, we give a trace where the messages $v_1$ and $v_3$ are causally related, but received in a different order by $q2$, thus violating causal delivery. 
This trace is not valid because the message $v1$ would be enqueued in the buffer of $q2$ before $\senda{p,q1,v2}$ is executed and thus, before $\senda{q1,q2,v3}$ as well.

\begin{figure}[t]
\includegraphics[width=9cm]{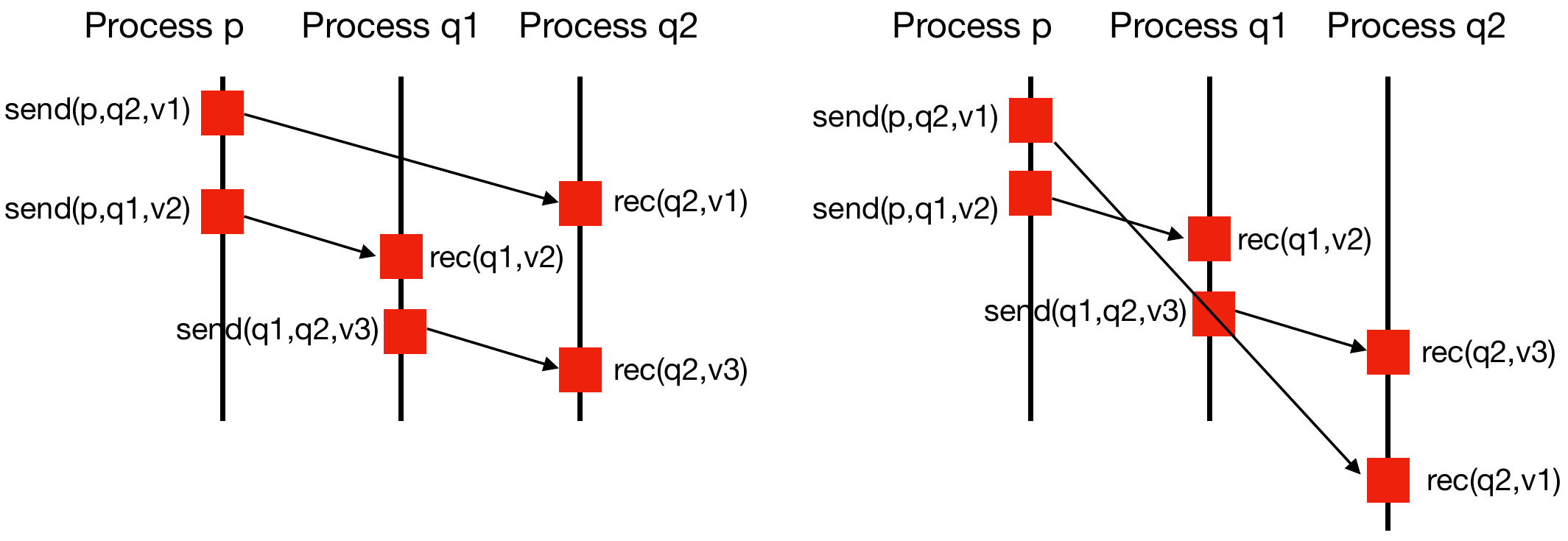}
\vspace{-3mm}
\caption{A trace satisfying causal delivery (on the left) and a trace violating causal delivery (on the right).}
\label{fig:ex-causal-delivery}
\vspace{-3mm}
\end{figure}

Formally, for a trace $t=(A,po,src)$, the transitive closure of $po\cup src$, denoted by $\leadsto_t$, is called the \emph{causal relation} of $t$. For instance, for the trace $t$ on the left of Figure~\ref{fig:ex-causal-delivery}, we have that $\senda{p,q2,v1}\leadsto_t \senda{q1,q2,v3}$.
A trace $t$ satisfies \emph{causal delivery} if 
for every two send actions $s_1$ and $s_2$ in $A$, 
\begin{align*}
&(s_1\leadsto_{t} s_2\land \<dest>(s_1)=\<dest>(s_2))\implies (\not\exists r_2\in A.\ s_2\match r_2)\lor \\
&\hspace{5cm}(\exists r_1,r_2\in A.\ s_1\match r_1\land s_2\match r_2\land (r_2,r_1)\not\in po)
\end{align*}

It can be easily proved that every trace $t\in \asynchTr{\mathcal{S}}$ satisfies causal delivery. 


\section{Synchronizability}\label{sec:criterion}

We define a property of message passing systems called \emph{$k$-synchronizability} as the equality between the set of traces generated by the asynchronous semantics and the set of traces generated by a particular semantics called \emph{$k$-synchronous}. 

\begin{figure}[t]
\begin{center}
\includegraphics[width=6cm]{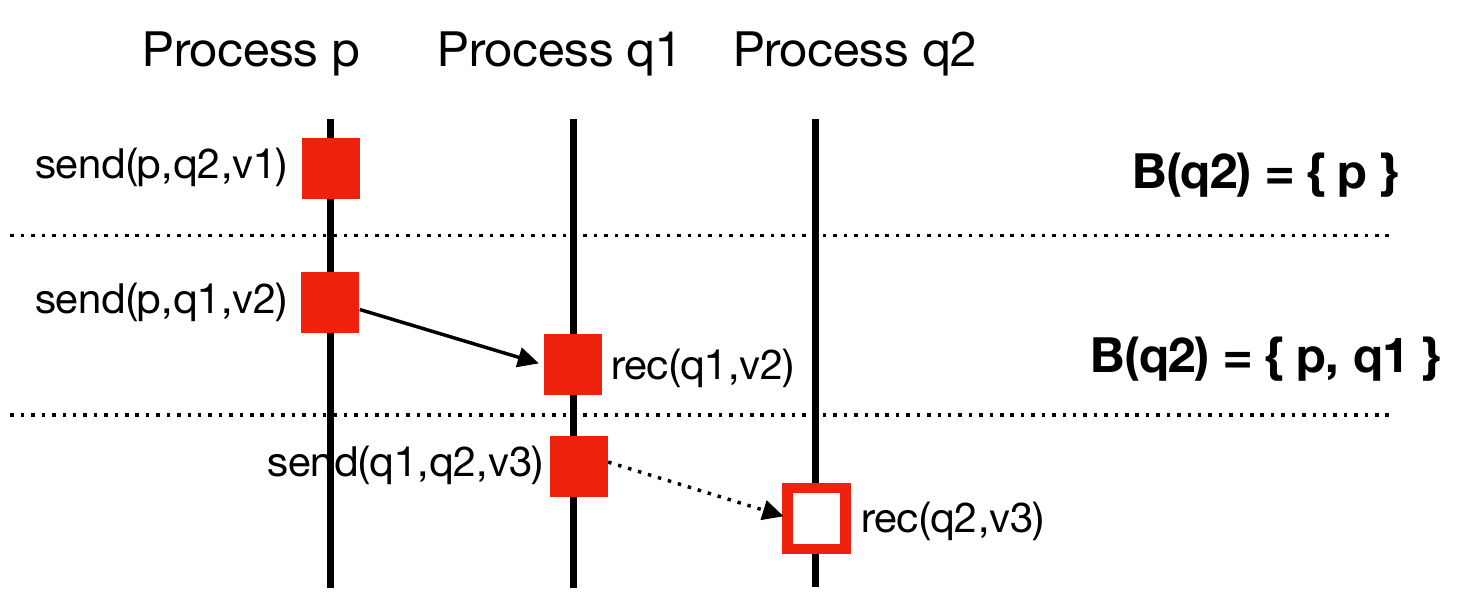}
\end{center}
\vspace{-3mm}
\caption{An execution of the $1$-synchronous semantics.}
\label{fig:ex-blocking}
\vspace{-6mm}
\end{figure}

The $k$-synchronous semantics uses an extended version of the standard rendez-vous primitive where more than one process is allowed to send a message and a process can send multiple messages, but all these messages must be received before being allowed to send more messages. This primitive is called \emph{$k$-exchange} if the number of sent messages is smaller than $k$. For instance, the execution 
$
\send{1}{p_1,q,\_}\ 
\send{2}{p_2,q,\_}\ 
\rec{1}{q,\_}\ 
\rec{2}{q,\_} 
$
is an instance of a $2$-exchange. 
Actually, to ensure that the $k$-synchronous semantics is prefix-closed (if it admits an execution, then it admits all its prefixes), we allow messages to be dropped during a $k$-exchange transition. 
For instance, the prefix of the previous execution without the last receive ($\rec{2}{q,\_}$) is also an instance of a $2$-exchange. The presence of unmatched send actions must be constrained in order to ensure that the set of executions admitted by the $k$-synchronous semantics satisfies causal delivery. Consider for instance the execution in Figure~\ref{fig:ex-blocking} which can be produced by a sequence of $1$-exchanges. The receive action ($\reca{q_2,v_3}$) pictured as an empty box needs to be disabled in order to exclude violations of causal delivery. To this, the semantics tracks for each process $p$ a set of processes $B(p)$ from which it is forbidden to receive messages. Following the sequence of $1$-exchanges in this execution, the unmatched $\senda{p,q2,v1}$ disables any receive by $q2$ of a message sent by $p$ (otherwise, it will be even a violation of the FIFO semantics of $q2$'s buffer). Therefore, the first $1$-exchange results in $B(q2)=\{p\}$. The second $1$-exchange (the message from $p$ to $q1$) forbids $q2$ to receive any message from $q1$. Otherwise, this message will be necessarily causally related to $v1$, and receiving it will lead to a violation of causal delivery. Therefore, when reaching $\senda{q1,q2,v_3}$ the receive $\reca{q_2,v_3}$ is disabled because $q1\in B(q2)$.


\begin{figure} [t]
\footnotesize{
  \centering
  \begin{mathpar}
    \inferrule[$k$-exchange]{
      e\in S_{id}^*\cdot R_{id}^* \\ 
       |e| \leq 2\cdot k\\
      (\vec{l},\vec{\epsilon})\xrightarrow{e} (\vec{l'},\vec{b}),\mbox{ for some $\vec{b}$}\\
            \forall s,r\in e.\ s\match r\implies\<proc>(s)\not\in B(\<dest>(s)) \\
       B'(q) = B(q) \cup \{p: \exists s\in e\cap S_{id}.\ ((\not\exists r\in e.\ s\match r)\land p = \<proc>(s)\land q=\<dest>(s)) \\
        \hspace{3.5cm}\lor (\<proc>(s)\in B(q)\land \<dest>(s) = p)\}
    }{
      (\vec{l},B)
      \xRightarrow{e}{_k}
      (\vec{l'},B')
    }
    
    
  \end{mathpar}
  }
 \vspace{-6mm}
  \caption{The synchronous semantics of a message passing system $\mathcal{S}$. Above, $\vec{\epsilon}$ denotes a vector where all the components are $\epsilon$.
  }
  \label{fig:synch-sem}
\vspace{-5mm}
\end{figure}

Formally, a configuration $c'=(\vec{l},B)$ in the synchronous semantics is a vector $\vec{l}$ of local states together with a function $B:\<Pids>->2^{\<Pids>}$. The transition relation $\Rightarrow_k$ is defined in Figure~\ref{fig:synch-sem}. A \textsc{$k$-exchange} transition corresponds to a sequence of transitions of the asynchronous semantics starting from a configuration with empty buffers. The sequence of transitions is constrained to be a sequence of at most $k$ sends followed by a sequence of receives. The receives are enabled depending on previous unmatched sends as explained above, using the function $B$.
The semantics defined by $\Rightarrow_k$ is called the $k$-synchronous semantics.

Executions and traces are defined as in the case of the asynchronous semantics, using $\Rightarrow_k$ for some fixed $k$ instead of $\rightarrow$. The set of executions, resp., traces, of $\mathcal{S}$ under the $k$-synchronous semantics is denoted by $\synchExec{\mathcal{S}}{k}$, resp., $\synchTr{\mathcal{S}}{k}$. The executions in $\synchExec{\mathcal{S}}{k}$ and the traces in 
$\synchTr{\mathcal{S}}{k}$ are called $k$-synchronous. 

An execution $e$ such that $tr(e)$ is $k$-synchronous is called $k$-synchronizable. We omit $k$ when it is not important. 
The set of executions generated by a system $\mathcal{S}$ under the $k$-synchronous semantics is prefix-closed. Therefore, the set of its $k$-synchronizable executions is prefix-closed as well.
%
Also, $k$-synchronizable and $k$-synchronous executions are undistinguishable up to permutations.


\vspace{-1mm}
\begin{definition}\label{def:synchron}
A message passing system $\mathcal{S}$ is called \emph{$k$-synchronizable} when $\asynchTr{\mathcal{S}}=\synchTr{\mathcal{S}}{k}$.
\vspace{-1mm}
\end{definition}


It can be easily proved that $k$-synchronizable systems reach exactly the same set of local state vectors under the asynchronous and the $k$-synchronous semantics. Therefore, any assertion checking or invariant checking problem for a $k$-synchronizable system $\mathcal{S}$ can be solved by considering the $k$-synchronous semantics instead of the asynchronous one. 
In particular, this implies that such problems are decidable for finite-state $k$-synchronizable systems~\footnote{A system is called \emph{finite-state} when the number of local states of every process is bounded.} 
%
Appendix~\ref{asec:deadlocks} shows that the problem of detecting deadlocks in a $k$-synchronizable system can also be solved on the $k$-synchronous semantics instead of the asynchronous one.

\vspace{-1mm}

\section{Characterizing Synchronous Traces}\label{sec:characterizations}

We give a characterization of the traces generated by the $k$-synchronous semantics that uses a notion of \emph{conflict-graph} similar to the one used in conflict serializability~\cite{journals/jacm/Papadimitriou79b}. The nodes of the conflict graph correspond to pairs of matching actions (a send and a receive) or to unmatched sends, and the edges represent the program order relation between the actions represented by these nodes. 
\begin{wrapfigure}{l}{3cm}
\vspace{-5mm}
\includegraphics[width=3cm]{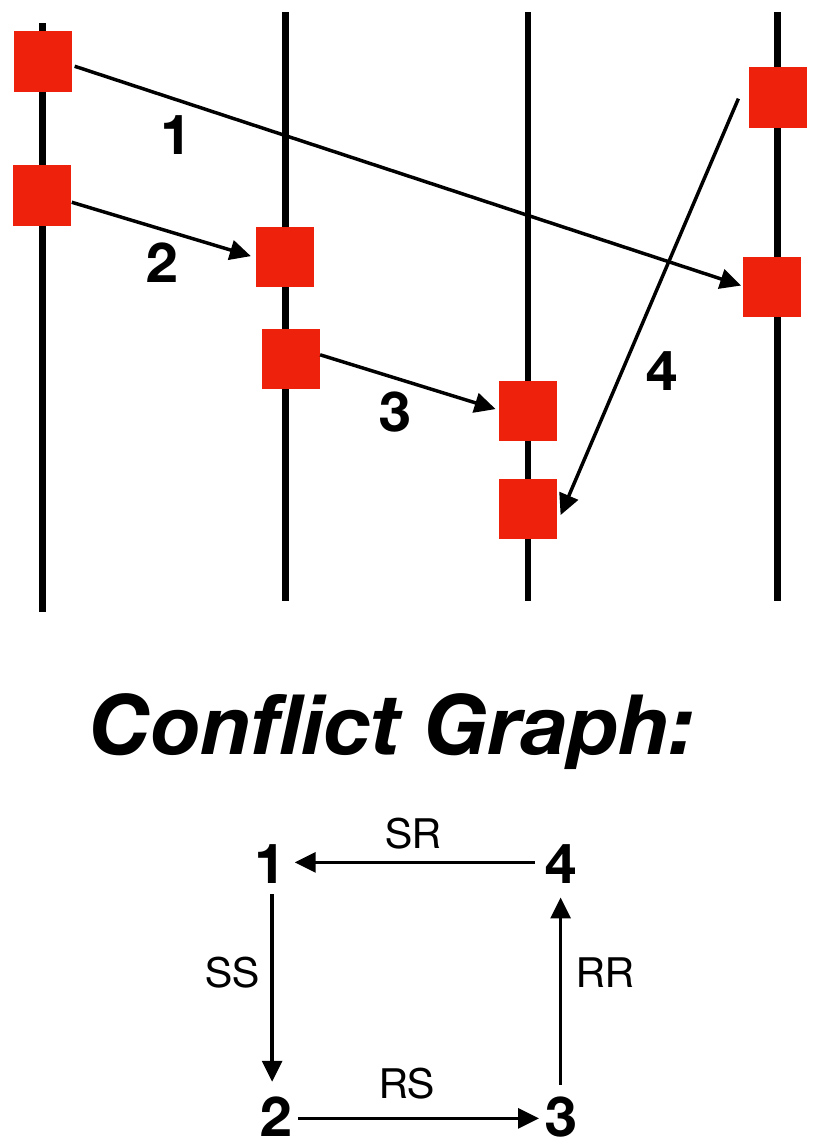}
\caption{ }
\label{fig:ex-rs-cycle}
\vspace{-8mm}
\end{wrapfigure}
For instance, an execution for which the conflict graph of its trace is acyclic, e.g., the execution in Figure~\ref{fig:commit-exec}, is ``equivalent'' to an execution where every receive immediately follows the matching send. 
In general, it is an execution of the $1$-synchronous semantics. For arbitrary values of $k$, the conflict graph may contain cycles, but of a particular form. For instance, traces of the $2$-synchronous semantics may contain a cycle of size 2 like the one in Figure~\ref{fig:elevator-exec}(b). More generally, we show that the conflict graph of a $k$-synchronous trace cannot contain cycles of size strictly bigger than $k$. However, this class of cycles is not sufficient to characterize precisely the $k$-synchronous traces. Consider for instance the trace on the left of Figure~\ref{fig:ex-rs-cycle}. Its conflict-graph contains a cycle of size $4$ (shown on the right), but the trace is not $4$-synchronous. The reason is that the messages tagged by $1$ and $4$ must be sent during the same exchange transition, but receiving message $4$ needs that the message $3$ is sent after $2$ is received. Therefore, it is not possible to schedule all the send actions before all the receives. Such scenarios correspond to cycles in the conflict graph where at least one receive is before a send in the program order. We show that excluding such cycles, in addition to cycles of size strictly bigger than $k$, is a precise characterization of $k$-synchronous traces.

    The \emph{conflict-graph} of a trace $t=(A,po,src)$ is the labeled directed graph $CG_t=\tup{V,E,\ell_E}$ where:
(1) the set of nodes $V$ includes one node for each pair of matching send and receive actions, and each unmatched send action in $t$, and 
(2) the set of edges $E$ is defined by: $(v,v') \in E'$ iff there exist actions $a \in \mathrm{act}(v)$ and $a' \in \mathrm{act}(v')$ such that $(a,a') \in po$ (where $\mathrm{act}(v)$ is the set of actions of trace $t$ corresponding to the graph node $v$). The label of the edge $(v,v')$ records whether $a$ and $a'$ are send or receive actions, i.e., for all $X,Y\in \{S,R\}$, $XY\in \ell(v,v')$ iff $a\in X_{id}$ and $a'\in Y_{id}$.

A direct consequence of previous results on conflict serializability~\cite{journals/jacm/Papadimitriou79b} is that 
a trace is $1$-synchronous whenever its conflict-graph is acyclic.
%
%
A cycle of a conflict graph $CG_t$ is called \emph{bad} when it contains 
an edge labeled by $RS$.
Otherwise, it is called \emph{good}.
The following is a characterization of $k$-synchronous traces (see Appendix~\ref{asec:characterizations} for a proof).


\begin{theorem}\label{lem:cg_k}
A trace $t$ satisfying causal delivery is $k$-synchronous if{f} every cycle in its conflict-graph is good and of size at most $k$.
\end{theorem}

Theorem~\ref{lem:cg_k} can be used to define a runtime monitor checking for $k$-synchronizability. 
The monitor maintains the conflict-graph of the trace produced by the system and checks whether it contains some bad cycle, or a cycle of size bigger than $k$.
While this approach requires dealing with unbounded message buffers, the next section shows that this is not necessary. Synchronizability violations, if any, can be exposed by executing the system under the \emph{synchronous} semantics.

\section{Checking Synchronizability}\label{sec:verif}

We show that checking $k$-synchronizability can be reduced 
to a reachability problem in a system
that executes under the \emph{$k$-synchronous} semantics 
(where message buffers are bounded). We show that every
\emph{borderline} synchronizability violation (for which every strict prefix is synchronizable) of a system $\mathcal{S}$ can be ``simulated''
by the synchronous semantics of a system $\mathcal{S'}$ where the reception of exactly one message is delayed (w.r.t. the synchronous semantics of $\mathcal{S}$).
Then, we give a monitor which observes executions of $\mathcal{S'}$ and identifies synchronizability violations
(there exists a run of this monitor that goes to error whenever such a violation exists). Proofs of the results in this section can be found in Appendix~\ref{sec:verif}.

\subsection{Borderline Synchronizability Violations}\label{ssec:verif1}

For a system $\mathcal{S}$, a violation to $k$-synchronizability $e$ is called \emph{borderline} when every strict prefix of 
$e$ is $k$-synchronizable. Figure~\ref{fig:ex-border-sim}(a) gives an example of a borderline violation to $1$-synchronizability (it is the same execution as in Figure~\ref{fig:elevator-exec}(b)).

We show that every borderline violation $e$ ends with a receive action and this action is included in every cycle of $CG_{tr(e)}$ that is 
bad or exceeds the bound $k$. Given a cycle $c = v,v_1,\ldots,v_n,v$ of a conflict graph $CG_t$, the node $v$ is called a \emph{critical} node of $c$ when $(v,v_1)$ is an $SX$ edge with $X\in \{S,R\}$ 
and $(v_n,v)$ is an $YR$ edge with $Y\in \{S,R\}$.

\begin{figure}[t]
\includegraphics[width=11cm]{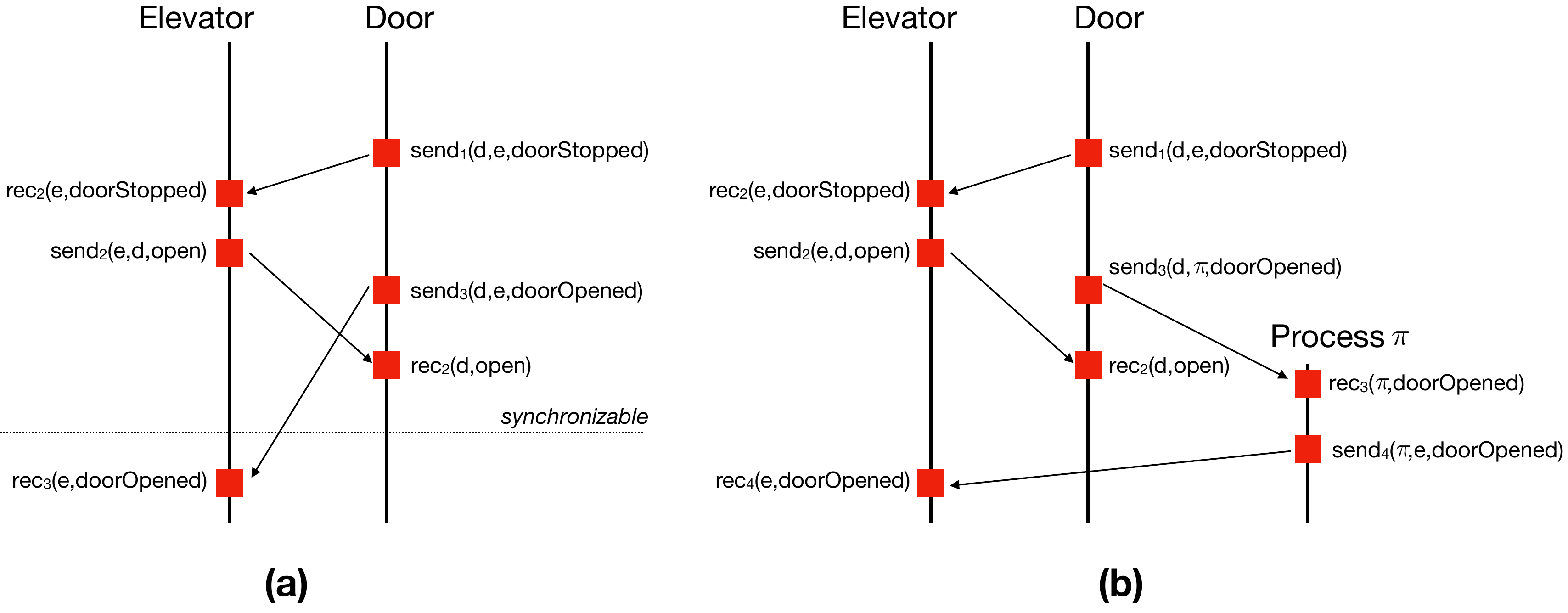}
\caption{A borderline violation to $1$-synchronizability.}
\label{fig:ex-border-sim}
\vspace{-5mm}
\end{figure}

\begin{lemma}
Let $e$ be a borderline violation to $k$-synchronizability of $\mathcal{S}$. Then, $e = e'\cdot r$ for some $e'\in (S_{id}\cup R_{id})^*$ and $r\in R_{id}$.
Moreover, the node $v$ of $CG_{tr(e)}$ representing $r$ (and the corresponding send) is a critical node of every cycle of 
$CG_{tr(e)}$ which is bad or of size bigger than $k$. 
\end{lemma}

\vspace{-1mm}

\subsection{Simulating Borderline Violations on the Synchronous Semantics}\label{ssec:verif2}

Let $\mathcal{S'}$ be a system obtained from $\mathcal{S}$ by ``delaying'' the reception of exactly one message, chosen nondeterministically: $\mathcal{S'}$ contains an additional process $\pi$ and exactly one message sent by a process in $\mathcal{S}$ can be non-deterministically redirected to $\pi$ which sends it to the original destination non-deterministically at a later time.
We show that the synchronous semantics of $\mathcal{S'}$ ``simulates'' a permutation of every borderline violation of 
$\mathcal{S}$. 
Figure~\ref{fig:ex-border-sim}(b) shows the synchronous execution of $\mathcal{S'}$ that corresponds to the borderline violation in Figure~\ref{fig:ex-border-sim}(a). It is essentially the same except for delaying the reception of $\text{doorOpened}$ by sending it to $\pi$ who relays it to the elevator at a later time.


The following result shows that the $k$-synchronous semantics of $\mathcal{S'}$ ``simulates'' all the borderline violations of $\mathcal{S}$, modulo permutations. 

\begin{lemma}
Let $e=e_1\cdot \send{i}{p,q,v}\cdot e_2\cdot \rec{i}{q,v}$ be a borderline violation to $k$-synchronizability of $\mathcal{S}$. Then, $\synchExec{\mathcal{S'}}{k}$ contains an execution $e'$ of the form: 
\begin{align*}
e'=e_1'\cdot \send{i}{p,\pi,(q,v)}\cdot \rec{i}{\pi,(q,v)}\cdot e_2'\cdot \send{j}{\pi,q,v}\cdot \rec{j}{q,v}
\end{align*}
such that $e_1'\cdot \send{i}{p,q,v} \cdot e_2'$ is a permutation of $e_1\cdot \send{i}{p,q,v}\cdot e_2$.
\end{lemma}

Checking $k$-synchronizability for $\mathcal{S}$ on the system $\mathcal{S'}$ would require that every (synchronous) execution of $\mathcal{S'}$ can be transformed to an execution of $\mathcal{S}$ by applying an homomorphism $\sigma$ where the send/receive pair with destination $\pi$ is replaced with the original send action and the send/receive pair initiated by $\pi$ is replaced with the original receive action (all the other actions are left unchanged). However, this is not true in general. For instance, $\mathcal{S'}$ may admit an execution 
\begin{align*}
\send{i}{p,\pi,(q,v)}\cdot \rec{i}{\pi,(q,v)}\cdot \send{i'}{p,q,v'}\cdot \rec{i'}{q,v'}\cdot \send{j}{\pi,q,v}\cdot \rec{j}{q,v}
\end{align*}
where a message sent after the one redirected to $\pi$ is received earlier (and the two messages were sent by the same process $p$). This execution is possible under the $1$-synchronous semantics of $\mathcal{S'}$. Applying the homomorphism $\sigma$, we get the execution 
$
\send{i}{p,q,v)}\cdot \send{i'}{p,q,v'}\cdot \rec{i'}{q,v'}\rec{i}{q,v}
$
which violates causal delivery and it is thus not possible under the asynchronous semantics of $\mathcal{S}$.
Our solution to this problem is to define a monitor $\mathcal{M}_{\mathit{causal}}$ which excludes such executions of $\mathcal{S'}$ when run under the synchronous semantics, i.e., it goes to an error state whenever applying the homomorphism $\sigma$ leads to a violation of causal delivery. This monitor is based on the same principles that we used to exclude violations of causal delivery in the synchronous semantics in the presence of unmatched sends (the component $B$ from a synchronous configuration).

\vspace{-1mm}

\subsection{Detecting Synchronizability Violations}\label{ssec:verif4}

We complete the reduction of checking $k$-synchronizability to a reachability problem under the $k$-synchronous semantics by defining a monitor $\mathcal{M}_{\mathit{viol}}(k)$ which observes executions in $\mathcal{S}_k' \paral\mathcal{M}_{\mathit{causal}}$ and checks whether they represent violations to $k$-synchronizability. We show that there exists a run of $\mathcal{M}_{\mathit{viol}}(k)$ that goes to an error state whenever such a violation exists. 

Essentially, $\mathcal{M}_{\mathit{viol}}(k)$ observes the sequence of $k$-exchanges in an execution and constructs a conflict graph cycle, interpreting the sequence $\send{i}{p,\pi,(q,v)}\rec{i}{\pi,(q,v)}$ as in the original system $\mathcal{S}$, i.e., as $\send{i}{p,q,v}$, and $\send{i}{\pi,q,v}\rec{i}{q,v}$ as $\rec{i}{q,v}$. 
By Lemma~\ref{lem:critical}, every cycle that is a witness for \emph{non} $k$-synchronizability includes the node representing the pair $\send{i}{p,q,v}$, $\rec{i}{q,v}$. Moreover, the successor of this node in the cycle represents an action that is executed by $p$ and the predecessor an action executed by $q$. Therefore, the monitor searches for a conflict-graph path from a node representing an action of $p$ to a node representing an action of $q$. Whenever it founds such a path it goes to an error state.
The set of executions in $\mathcal{S}_k' \paral\mathcal{M}_{\mathit{causal}}$ for which $\mathcal{M}_{\mathit{viol}}(k)$ goes to an error state 
is denoted by $\mathcal{S}_k' \paral\mathcal{M}_{\mathit{causal}}\paral \neg \mathcal{M}_{\mathit{viol}}(k)$.

\begin{theorem}\label{th:main-verif}
For a given $k$, a system $\mathcal{S}$ is $k$-synchronizable iff 
\begin{align*}
\mathcal{S}_k' \paral\mathcal{M}_{\mathit{causal}}\paral \neg \mathcal{M}_{\mathit{viol}}(k)=\emptyset
\end{align*}
\end{theorem}

\section{Decidability results}\label{sec:decidability}

We investigate several decidability and asymptotic complexity questions concerning the synchronous semantics and synchronizability. 
Proofs of the results in this section can be found in Appendix~\ref{asec:decidability}.

Given a system $\mathcal{S}$, an integer $k$, and a local state $l$, \emph{the reachability problem under the $k$-synchronous semantics} asks whether there exists a $k$-synchronous execution of $\mathcal{S}$ reaching a configuration $(\vec{l},B)$ with $l=\vec{l}_p$ for some $p\in\<Pids>$.

\begin{theorem}\label{th:dec1}
For a finite-state system $\mathcal{S}$, the reachability problem under the $k$-synchronous semantics and the problem of checking $k$-synchronizability of $\mathcal{S}$ are decidable and PSPACE-complete.
\end{theorem}

\begin{figure}[t]
\hspace{2cm}
\includegraphics[width=7cm]{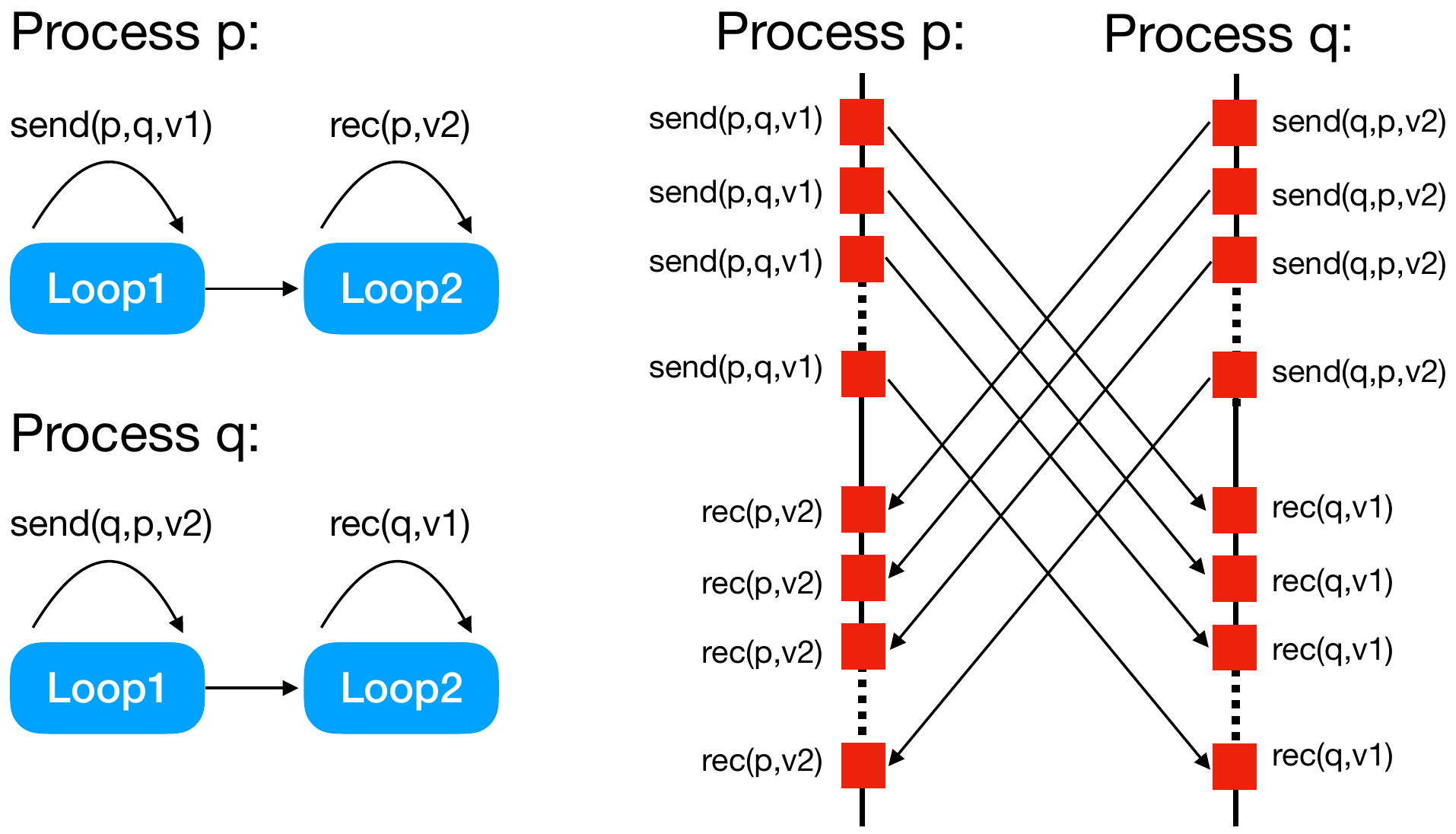}
\caption{An example of a system which is not $k$-synchronizable, for every $k$.}
\label{fig:decid_ex}
\vspace{-5mm}
\end{figure}

We now give a syntactical criterion that imposes an upper bound on the number $k$ for which a system could be $k$-synchronizable.
In general, there are two reasons for which a system is not $k$-synchronizable, for every $k$. It either admits an execution with a bad conflict-graph cycle (e.g., the execution in Figure~\ref{fig:ex-rs-cycle}), or it admits executions with infinitely increasing conflict-graph cycles. If a system admits a bad conflict-graph cycle, then there exists a $k$ for which it can be shown to be non $k$-synchronizable (a coarse upper bound for $k$ is the length of the execution containing this cycle). The second case is exemplified by the system in Figure~\ref{fig:decid_ex}: the two loops in each process allow to create executions with unboundedly many send actions before any receive is enabled. However, the systems we have encountered in practice do not contain such scenarios. 

In fact, the large majority of the processes composing practical systems, e.g., systems developed in the P language~\footnote{Available at \url{https://github.com/p-org}.}, perform a bounded number of consecutive receives, and a bounded number of sends before a receive. If all processes in the system would satisfy this constraint, then there exists a bound $k_s$ on the number of sends that are enabled before a receive, and a bound $k_r$ on the number of receives that are enabled before a send, which would imply that the system is $k$-synchronizable for some $k$ iff it is $k$-synchronizable for some $k\leq (k_s+k_r)\times |\<Pids>|$. However, there exist processes which don't satisfy this constraint, e.g., a consumer in a standard producer-consumer scenario and the process ${\tt Elevator}$ in Figure~\ref{fig:elevator}, which performs an unbounded number of consecutive receive actions. While in the first case, the system would be $1$-synchronous, in the second case, the unbounded number of receives is just an ``optimization'' that doesn't change the set of reachable local state vectors. The self loop where an unbounded number of messages ${\tt closeDoor}$ can be received from the ${\tt User}$ means that all these messages can be ignored since the door of the elevator is anyway closed. This unbounded interaction between ${\tt Elevator}$ and ${\tt User}$ will leave both processes in exactly the same state.
Removing this self loop and considering executions where the ${\tt User}$ sends exactly one message ${\tt closeDoor}$ instead of an unbounded sequence (before a message ${\tt openDoor}$) will allow to discover all the reachable local state vectors. Ignoring the self-loops in ${\tt Door}$ can be motivated in the same way. 

Let $\mathcal{S}=((\<Lsts>_p,\delta_p,l_p^0)\mid p\in\<Pids>)$ be a message passing system. A process $p$ is called \emph{$k$-receive bounded} when it can perform at most $k$ consecutive receives, i.e., for every sequence $w\in (S\cup R)^*$ accepted by the labeled transition system $(\<Lsts>_p,\delta_p,l_p^0)$, there exists no decomposition $w=w_1\cdot w_2\cdot w_3$ where $w_2\in R^*$, and the length of $w_2$ is strictly bigger than $k$. A process $p$ is called \emph{$k$-send bounded} when it can perform at most $k$ consecutive sends before a receive, i.e., for every sequence $w\in (S\cup R)^*$ accepted by the labeled transition system $(\<Lsts>_p,\delta_p,l_p^0)$, there exists no decomposition $w=w_1\cdot w_2\cdot r\cdot  w_3$, where $r\in R$, $w_2\in S^*$, and the length of $w_2$ is strictly bigger than $k$.
For instance, all the processes in the distributed commit protocol in Figure~\ref{fig:commit} are $2$-receive bounded and $2$-send bounded.
The system $\mathcal{S}$ is called \emph{flow-bounded} when there exists a constant $k$ such that every process $p\in\<Pids>$ is $k$-receive bounded and $k$-send bounded.
Note that verifying flow-boundedness is reducible to a reachability problem for a single process and thus, decidable for finite-state processes.

\begin{theorem}
For a flow-bounded system $\mathcal{S}$, the problem of checking if there exists some $k$ such that $\mathcal{S}$ is $k$-synchronizable, is decidable.
\end{theorem}

\section{Related Work}\label{sec:related}

Automatic verification of asynchronous message passing systems is undecidable in general~\cite{DBLP:journals/jacm/BrandZ83}. 
A number of decidable subclasses has been proposed. 
The class of systems, called \emph{synchronizable} as well, in~\cite{DBLP:journals/tcs/BasuB16}, requires that a system generates the same sequence of send actions when executed under the asynchronous semantics as when executed under a synchronous semantics based on rendezvous communication. These systems are all $1$-synchronizable, but the inclusion is strict (the $1$-synchronous semantics allows unmatched sends which cannot be produced by rendezvous communication). The techniques proposed in~\cite{DBLP:journals/tcs/BasuB16} to check that a system is synchronizable according to their definition cannot be extended to $k$-synchronizable systems.
Other classes of systems that are $1$-synchronizable have been proposed in the context of session types, e.g.,~\cite{DBLP:conf/esop/DenielouY12,DBLP:journals/jacm/HondaYC16,DBLP:conf/esop/HondaVK98,DBLP:conf/popl/LangeTY15}. 
Our class of synchronizable systems differs also from other classes of communicating systems that restrict the type of communication, e.g., lossy-communication~\cite{DBLP:journals/iandc/AbdullaJ96}, half-duplex communication~\cite{DBLP:journals/iandc/CeceF05}, or the topology of the interaction, e.g., tree-based communication in particular classes of concurrent push-down systems~\cite{DBLP:conf/tacas/TorreMP08,DBLP:journals/corr/abs-1209-0359}.

The question of deciding if all computations of a communicating system are equivalent (in the language theoretic sense) to computations with bounded buffers has been studied in, e.g., \cite{DBLP:journals/fuin/GenestKM07}, where this problem is proved to be undecidable. The link between that problem and our synchronizability problem is not (yet) clear, mainly because non synchronizable computations may use bounded buffers.

Our work proposes a solution to the question of defining adequate (in terms of coverage and complexity) parametrized bounded analyses for message passing programs, providing the analogous of concepts such as context-bounding or delay-bounding defined for shared-memory concurrent programs. Bounded analyses for concurrent systems
was in fact initiated by the work on bounded-context switch analysis~\cite{DBLP:conf/pldi/QadeerW04,DBLP:conf/tacas/QadeerR05,DBLP:conf/cav/LalR08}. For shared-memory programs, this work has been extended to an unbounded number of threads or larger classes of behaviors, e.g.,~\cite{DBLP:conf/sas/BouajjaniEP11,DBLP:conf/popl/EmmiQR11,DBLP:conf/spin/KiddJV10,DBLP:conf/cav/TorreMP10}. Few bounded analyses incomparable to ours have been proposed for message passing systems, e.g.,~\cite{DBLP:conf/tacas/TorreMP08,DBLP:conf/tacas/BouajjaniE12}. Contrary to our work, all these works on bounded analyses in general do not propose decision procedures for checking if the analysis is complete, i.e., it covers the whole set of reachable states. The only exception that we are aware of is~\cite{DBLP:conf/cav/TorreMP10}, which however concerns shared-memory programs.  

Partial-order reduction techniques, e.g.,~\cite{DBLP:conf/popl/AbdullaAJS14,DBLP:conf/popl/FlanaganG05}, allow to define equivalence classes on behaviors, based on notions of action independence and explore (ideally) only one representative of each class. This has lead to efficient algorithmic techniques for enhanced model-checking of concurrent shared-memory programs that consider only a subset of relevant action interleavings. In the worst case, these techniques will still need to explore all of the interleavings. Moreover, these techniques are not guaranteed to terminate when the buffers are unbounded.

The work in~\cite{DBLP:conf/oopsla/Desai0M14} defines a particular class of schedulers, that roughly, prioritize receive actions over send actions, which is complete in the sense that it allows to construct the whole set of reachable states. Defining an analysis based on this class of schedulers has the same drawback as partial-order reductions, in the worst case, it needs to explore all interleavings, and termination is not guaranteed. 

The notion of conflict-graph is similar to the one used for defining conflict serializability~\cite{journals/jacm/Papadimitriou79b}. However, our algorithms and proof techniques are very different from those used in this context, e.g.,~\cite{DBLP:journals/iandc/AlurMP00,DBLP:conf/esop/BouajjaniEEH13,DBLP:conf/cav/FarzanM08}. Our approach considers several classes of cycles in these graphs and focuses on showing that these cycles can be detected without exploring all the behaviors of a system.

The approach we adopt in this work is related to robustness checking \cite{DBLP:conf/se/BouajjaniDM14,DBLP:conf/esop/BouajjaniEEOT17}. The general paradigm is to decide that a program has the same behaviors under two semantics, one being weaker than the other, by showing a polynomial reduction to a state reachability problem under the stronger semantics, i.e., by avoiding the consideration of the weak semantics that is in general far more complex to deal with than the strong one. For instance, in the case of our work, the class of message passing programs with unbounded FIFO channels is Turing powerful, but still, surprisingly, $k$-synchronizability of these programs is decidable and PSPACE-complete (i.e., as hard as state reachability in programs with bounded channels). However, the results in \cite{DBLP:conf/se/BouajjaniDM14,DBLP:conf/esop/BouajjaniEEOT17} can not be applied to solve the question of synchronizability we consider in this paper; in each of \cite{DBLP:conf/se/BouajjaniDM14}, \cite{DBLP:conf/esop/BouajjaniEEOT17}, and our work, the considered classes of programs and their strong/weak semantics are very different (shared-memory concurrent programs running over a relaxed memory model in \cite{DBLP:conf/se/BouajjaniDM14}, and shared-memory concurrent programs with dynamic asynchronous process creation in \cite{DBLP:conf/esop/BouajjaniEEOT17}), and the corresponding robustness checking algorithms are based on distinct concepts and techniques. 

\section{Experimental Evaluation}

\begin{wrapfigure}{l}{5.7cm}
\vspace{-8mm}
{\scriptsize
\begin{tabular}{|c|c|c|c|c|c|}
\hline
Name & Proc & Loc & Instr & $k$ & Time \\
\hline
Elevator& 3 & 94 & 481 & 2 & 13m  \\
\hline
Two-phase commit &4 & 145 & 381 & 1 & 10m  \\
\hline
Replication Storage & 5 & 243 & 515 & 4 & 15m \\
\hline
German Protocol & 5 & 300 & 637 & 2 & 25m \\
\hline
OSR & 4 & 154 & 464  & 1 & 22m \\
\hline
\end{tabular}
}
\caption{Experimental results.}
\label{fig:experiments}
\vspace{-7mm}
\end{wrapfigure}
As a proof of concept, we have applied our procedure for checking $k$-synchronizability to a set of examples extracted from the distribution of the P language~\footnote{Available at \url{https://github.com/p-org}.}. In the absence of an exhaustive model-checker for this language, we have rewritten these examples in the Promela language and used the Spin model checker~\footnote{Available at \url{http://spinroot.com}} for discharging the reachability queries. For a given Promela program, its $k$-synchronous semantics is implemented as an instrumentation which uses additional boolean variables to enforce that sends and receives interleave in $k$-exchange phases. Then, the monitors defined in Section~\ref{sec:verif} are defined as additional processes which observe the sequence of $k$-exchanges in an execution and update their state accordingly. Finding a conflict-graph cycle which witnesses non $k$-synchronizability corresponds to violating an assertion. 

The experimental data is listed in Figure~\ref{fig:experiments}: Proc, resp., Loc, is the number of processes, resp., the number of lines of code (loc) of the original program, Instr is the number of loc added by the instrumentation, $k$ is the \emph{minimal} integer for which the program is $k$-synchronizable, and Time gives the number of minutes needed for this check. The first three examples are the ones presented in Section~\ref{sec:motivation} and Appendix~\ref{asec:motivation}. The German protocol is a modelization of the cache-coherence protocol with the same name, and OSR is a modelization of a device driver.

\clearpage

\bibliography{references}

\newpage
\appendix
\section{Motivating Examples}\label{asec:motivation}

\begin{figure}[t]
\includegraphics[width=10cm]{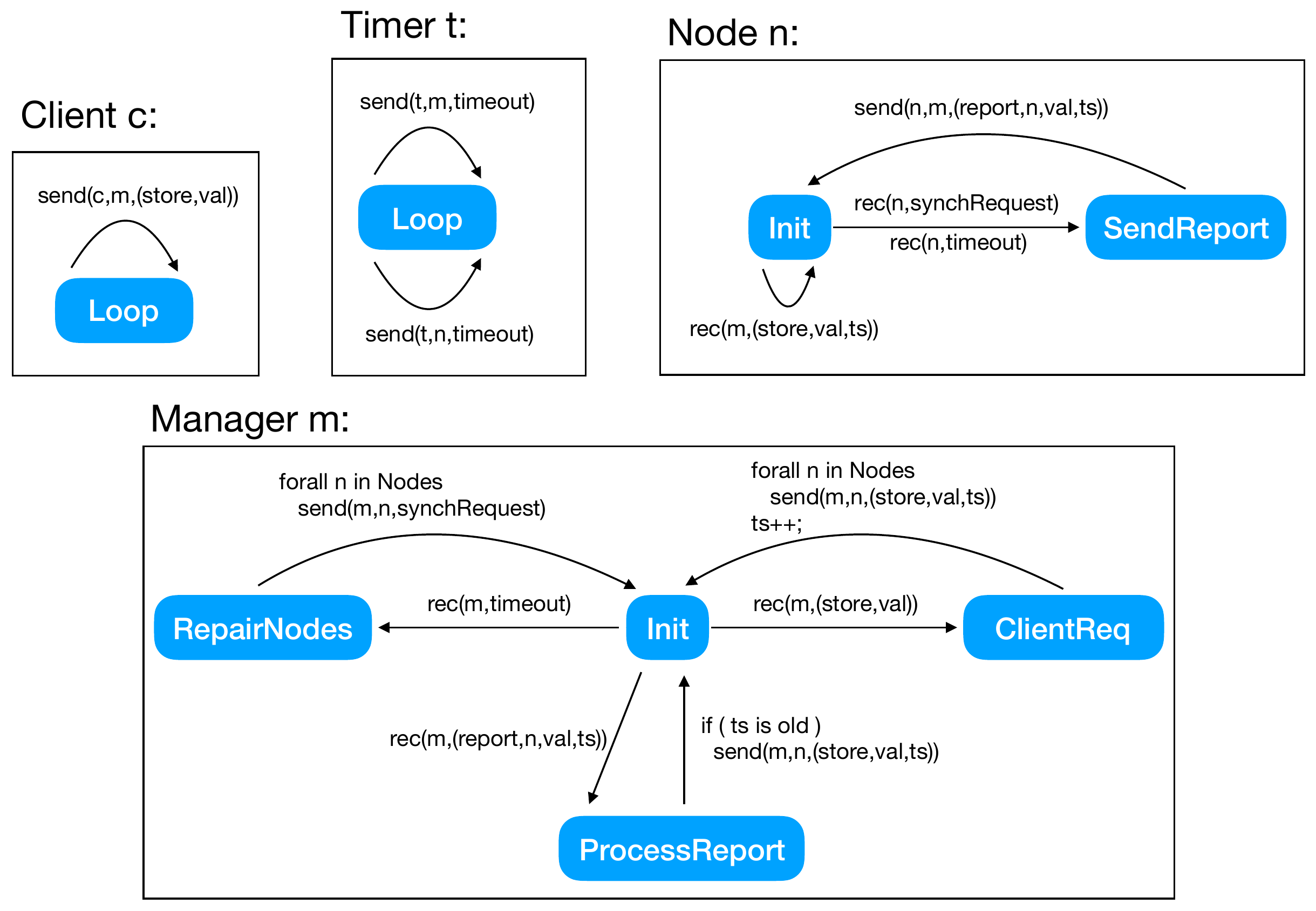}
\caption{A replication storage protocol.}
\label{fig:replication}
\end{figure}

\begin{figure}[t]
\includegraphics[width=7cm]{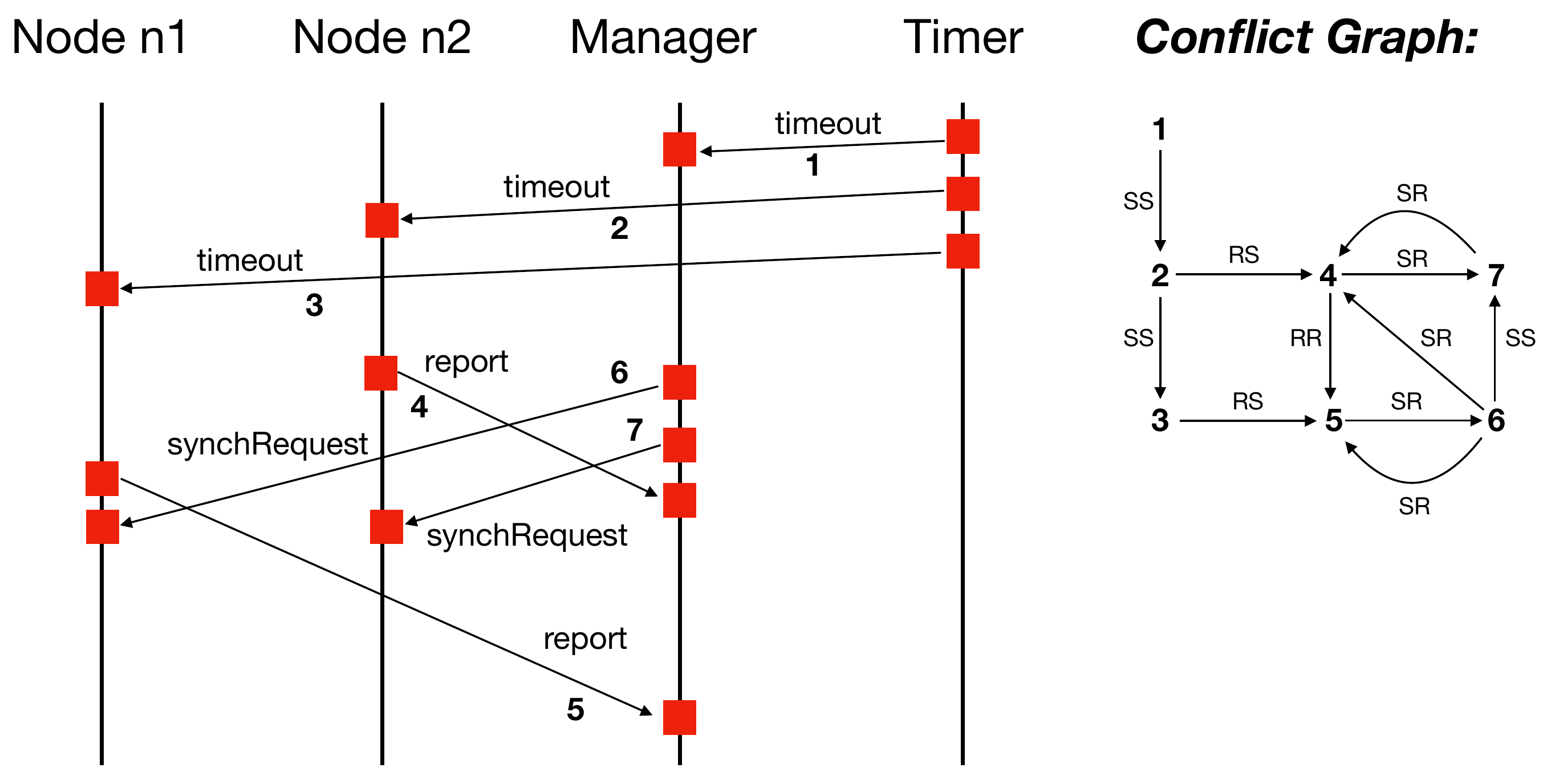}
\caption{An execution of the replication storage protocol and its conflict graph.}
\label{fig:replic-exec}
\end{figure}

We present an example illustrating the fact that exchanges can involve not only 2 processes but several ones, maybe all the processes of a system. Figure \ref{fig:replication} shows a replication storage protocol. A client can send update requests to the manager who is in charge of maintaining several storage replicas on different nodes. When the manager receives an update request from the client, it forwards this message to all the nodes. However, since messages can be delayed, the information in the nodes can be different at various points in time. Then, a mechanism is used to force regularly synchronization between the data versions stored in the different nodes. This mechanism is based on (1) using time-stamps for each message that is sent  by the manager to the nodes, and (2) a timer that triggers synchronizations: the timer can send timeout messages to either the manager or to the nodes. When a node receives such a timeout message, it sends a report to the manager with its current data value, and when the manager receives the timeout message, it sends to all nodes a message requesting a synchronization. When a node receives the synchronization request, it sends to the manager a report with its last value. After each reception of a report from a node, the manager checks if the received value is up-to-date using its time-stamp, and if not, it sends the most recent value to the node. Now, since the nodes and the manager may all receive timeout messages from the timer, nodes can start sending reports to the manager while the latter is already sending them synchronization requests. This leads to an exchange that may involve all the nodes, in addition to the manager. This situation is shown in Figure \ref{fig:replic-exec}. The conflict graph shown on the right of the figure contains a cycle of size 4, which is in this case the number of involved processes (two nodes and one manager), plus 1, which means that the considered computation is not 3-synchronizable. It can be checked that the system is actually 4-synchronizable. 

\section{Asynchronous Semantics of Message Passing Systems}\label{asec:semantics}

Formally, configuration $c=\tup{\vec{l},\vec{b}}$ is a vector  $\vec{l}$  of local states together with a vector $\vec{b}$ of message buffers  that are
represented as sequences of message payloads tagged with unique identifiers. The identifiers are used only for technical reasons, to identify a ``matching'' relation
between sends and receives.  These two vectors are indexed by elements of $\<Pids>$.
For a vector $\vec{x}$, let $\vec{x}_p$ denote the element of $\vec{x}$ of index $p$.
The initial configuration $c_0$ of the system $\mathcal{S}$ is the tuple of initial local states together with empty message buffers, i.e., 
$c_0=\tup{\vec{l},\vec{b}}$ where $\vec{l}_p=l_p^0$ and $\vec{b}=\epsilon$ for each $p\in\<Pids>$.

\begin{figure} [t]
\footnotesize{
  \centering
  \begin{mathpar}
    \inferrule[send]{
      m= (i,v) \\ 
      i\in \<Mids>\mbox{ fresh} \\
      (\vec{l}_p,\senda{p,q,v},l)\in \delta_p 
    }{
      \vec{l},\vec{b}
      \xrightarrow{\send{i}{p,q,v}}
      \vec{l}[\vec{l}_p\gets l],\vec{b}[\vec{b}_q\gets \vec{b}_q\cdot m]
    }
    
    \inferrule[receive]{
      \vec{b}_q = (i,v) \cdot b \\
      (\vec{l}_q,\reca{q,v},l)\in \delta_q
    }{
      \vec{l},\vec{b}
      \xrightarrow{\rec{i}{q,v}}
      \vec{l}[\vec{l}_q\gets l],\vec{b}[\vec{b}_q\gets b]
    }
    
  \end{mathpar}
  }
  \caption{The asynchronous semantics of a message passing system $\mathcal{S}$. For a vector $\vec{x}$, $\vec{x}[\vec{x}_p\gets E]$ is the vector $\vec{y}$ with $\vec{y}_q=\vec{x}_q$, for every $q\neq p$, and $\vec{x}_p=E$. Also, $\cdot$ denotes the concatenation of two sequences.
  }
  \label{fig:asynch-sem}
\end{figure}

The transition relation $\rightarrow$ in Figure~\ref{fig:asynch-sem} is determined by a message passing system $\mathcal{S}$, and maps
a configuration $c_1$ to another configuration $c_2$ and indexed action $a\in S_{id}\cup R_{id}$.
The effect of a \textsc{send} transition is to enqueue the message payload tagged with a fresh identifier to the buffer of the destination, and the effect of a \textsc{receive} transition is to dequeue a message from the local buffer.

An \emph{execution} of a system $\mathcal{S}$ under the asynchronous semantics to configuration ${c}_n$
is a sequence of indexed actions $a_1 \ldots a_n$ such that 
there exists a configuration sequence ${c}_0 {c}_1 \ldots {c}_n$ with
$  {c}_m \xrightarrow{a_{m+1}} {c}_{m+1}$
for all $0 \le m < n$. 
We 
say that ${c}_n$ is reachable in $\mathcal{S}$ under the asynchronous semantics. 
The \emph{reachable local state vectors} of $\mathcal{S}$, denoted by $\asynchSt{\mathcal{S}}$, is the
set of local state vectors in reachable configurations.
The set of executions of $\mathcal{S}$ under the asynchronous semantics is denoted by $\asynchExec{\mathcal{S}}$.
In the following, we don't distinguish between executions obtained by a consistent renaming of the message identifiers.

\section{Detecting deadlocks}\label{asec:deadlocks}

In addition to assertion/invariant checking, we show that the problem of detecting deadlocks in a $k$-synchronizable system can also be solved using the $k$-synchronous semantics instead of the asynchronous one. For a process $p$, a state $l\in \<Lsts>_p$ is called \emph{receiving} when $(l,a,l')\in\delta_p$, for some $l'$, implies that $a\in R_p$. For instance, the state {\tt Init} of the process {\tt Manager} in Figure~\ref{fig:commit} is receiving. The state $l$ is called \emph{final} when there exists no $l'$ and $a$ such that $(l,a,l')\in\delta_p$.

We consider several notions of deadlock: a configuration $c=(\vec{l},\vec{b})$ is called 
\begin{itemize}
	\item \emph{empty-buffer deadlock} when all the buffers are empty, there exists at least one process waiting for a message, and all the other processes are either in a final or receiving state, i.e., $\vec{b}=\vec{\epsilon}$, there exists $p\in\<Pids>$ such that $(\vec{l}_p,r,l')\in\delta_p$ for some $r\in R$, and for all $q\in \<Pids>$, $\vec{l}_q$ is receiving or final,
	\item \emph{orphan message configuration} when there is at least a non-empty buffer and each process is in a final state, i.e., $\vec{b}\neq\vec{\epsilon}$ and for all $p\in \<Pids>$, $\vec{l}_p$ is final,
	\item \emph{unspecified reception} when some process is prevented from receiving any message from its buffer, i.e., there exists $p\in\<Pids>$ such that $\vec{l}_p$ is receiving, and for all $\reca{p,v}\in R$, if $(\vec{l}_p,\reca{p,v},l')\in \delta_p$, for some $l'$, then $\vec{b}_p\not\in v\<Val>^*$.
\end{itemize}

We show that reachability of such configurations in the original asynchronous semantics can be reduced to reachability problems over the synchronous semantics, provided that the system is $k$-synchronizable. The constraints over the buffers of the asynchronous configurations are replaced by constraints over the executions (traces) of the synchronous semantics. For instance, an execution reaching an empty-buffer configuration is ``equivalent'' to a synchronous \emph{matched} execution where every sent message has been received (assuming $k$-synchronizability).

We extend the notion of empty-buffer deadlock to configurations of the synchronous semantics by removing the condition that the buffers are empty.

\begin{theorem}\label{th:deadlock1}
A $k$-synchronizable system $\mathcal{S}$ reaches an empty-buffer deadlock configuration under the asynchronous semantics iff the $k$-synchronous semantics of $\mathcal{S}$ admits a matched execution to an empty-buffer deadlock configuration.
\end{theorem}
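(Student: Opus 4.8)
The plan is to prove both directions of the iff by exploiting the $k$-synchronizability hypothesis, which guarantees that every asynchronous execution has a $k$-synchronous execution with the same happens-before relation. Recall that an empty-buffer deadlock configuration is characterized purely in terms of the local state vector $\vec{l}$ (some process stuck in a receiving state, all processes final or receiving) together with the requirement that all buffers are empty; the synchronous-semantics version drops the empty-buffer clause. So the key observation is that "reaching a deadlock configuration" is essentially a reachability question about local state vectors, plus a buffer-emptiness side condition that on the synchronous side is captured by the trace being \emph{matched}.

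First I would handle the forward direction. Suppose $\mathcal{S}$ reaches an empty-buffer deadlock configuration $c=(\vec{l},\vec{\epsilon})$ under the asynchronous semantics via an execution $e\in\asynchExec{\mathcal{S}}$. Since all buffers in $c$ are empty, $e$ is a matched execution (every send is followed by its matching receive). By $k$-synchronizability, there is a $k$-synchronous execution $e'$ with the same happens-before relation as $e$; standard facts about happens-before-equivalent executions (from the robustness section) give that $e'$ is also matched and reaches the same final local state vector $\vec{l}$. Since $\vec{l}$ is unchanged, the three conditions defining an empty-buffer deadlock (the existence of a blocked receiving process, and every process being receiving or final) still hold, because they are predicates on $\vec{l}$ only. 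Hence the $k$-synchronous semantics admits a matched execution to an empty-buffer deadlock configuration (in the relaxed sense, where buffers need not be empty — although here, being a matched synchronous execution, the relevant local configuration is reached with the expected buffer contents).

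For the converse, suppose the $k$-synchronous semantics of $\mathcal{S}$ has a matched execution $e'$ reaching a configuration whose local state vector $\vec{l}$ satisfies the deadlock conditions. Since the $k$-synchronous semantics under-approximates the asynchronous one (every $k$-synchronous execution is, modulo reordering within message exchanges, realizable asynchronously — again by the results relating the two semantics), there is an asynchronous execution $e$ with the same happens-before relation, hence reaching the same local state vector $\vec{l}$, and since $e'$ is matched so is $e$, meaning all buffers are empty in the reached configuration. Thus $\mathcal{S}$ reaches $(\vec{l},\vec{\epsilon})$, which is an empty-buffer deadlock configuration because the defining conditions depend only on $\vec{l}$.

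The main obstacle — and the step deserving the most care — is the claim that happens-before-equivalent executions reach the same local state vector and preserve the "matched" property. This requires that each process's sequence of local actions (projection of the execution onto that process) is invariant under happens-before equivalence, so that each process traverses the same local states and ends in the same one; and that the set of unmatched sends is determined by the happens-before relation. Both should follow from how the happens-before relation and the equivalence are set up in the earlier sections, but one must check that the relaxed notion of empty-buffer deadlock on the synchronous side lines up exactly, i.e., that a \emph{matched} synchronous execution reaching a deadlocked $\vec{l}$ corresponds precisely to an \emph{empty-buffer} asynchronous deadlock rather than, say, an unspecified-reception configuration. A secondary point to verify is that "there exists a process waiting for a message" is genuinely a property of $\vec{l}$ and not of the buffer contents — which it is, by the definition given (it refers to $(\vec{l}_p, r, l')\in\delta_p$ for some receive $r$), so no process that is final can be the witness, and this is preserved.
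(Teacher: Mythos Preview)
Your proposal is correct and follows essentially the same approach as the paper's proof: empty buffers force the execution to be matched, $k$-synchronizability yields an equivalent $k$-synchronous execution, and happens-before equivalence preserves the local state vector (the paper invokes a specific indistinguishability lemma for this step, while you appeal more generally to the robustness results). The paper only spells out the only-if direction and declares the reverse ``similar,'' whereas you treat both directions explicitly and are more careful about isolating why the deadlock predicate depends only on $\vec{l}$; otherwise the arguments coincide.
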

\begin{proof}
We prove the only-if direction, the reverse being similar. 
Let $e$ be an execution in $\asynchExec{\mathcal{S}}$ to an empty-buffer deadlock configuration $(\vec{l},\vec{\epsilon})$. Since the buffers are empty, by the definition of the asynchronous semantics, we get that $e$ is matched. By $k$-synchronizability, there exists a permutation $e'$ of $e$ that belongs to $\synchExec{\mathcal{S}}{k}$. Then, by Lemma~\ref{lem:undist}, $e'$ is an execution to a configuration $(\vec{l},B)$, for some $B$, which finishes the proof.
\end{proof}

A configuration $(\vec{l},B)$ of the synchronous semantics is called \emph{final} when every local state $\vec{l}_p$ with $p\in\<Pids>$ is final. The proof of the following result is similar to that of Theorem~\ref{th:deadlock1}, the only addition being that an asynchronous execution to a configuration with non-empty buffers corresponds to a synchronous execution with unmatched send actions (provided that the system is $k$-synchronizable). 

\begin{theorem}
A $k$-synchronizable system $\mathcal{S}$ reaches an orphan message configuration under the asynchronous semantics iff the $k$-synchronous semantics of $\mathcal{S}$ admits an execution containing at least one unmatched send to a final configuration.
\end{theorem}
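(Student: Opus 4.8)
The plan is to mirror the proof of Theorem~\ref{th:deadlock1}, treating the two implications separately and replacing the ``empty buffer / matched execution'' bookkeeping by ``non-empty buffer / at least one unmatched send''.

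For the only-if direction, I would start from an asynchronous execution $e\in\asynchExec{\mathcal{S}}$ reaching an orphan message configuration $(\vec{l},\vec{b})$, i.e., $\vec{b}\neq\vec{\epsilon}$ and every $\vec{l}_p$ is final. The first observation is that a non-empty buffer witnesses a send action in $e$ whose matching receive never occurs, so $e$ contains at least one unmatched send. Next I would invoke $k$-synchronizability to obtain a permutation $e'\in\synchExec{\mathcal{S}}{k}$ of $e$, then argue that $e'$ still has an unmatched send, and finally apply Lemma~\ref{lem:undist} to conclude that $e'$ reaches a configuration $(\vec{l},B)$ with the same (final) local state vector, yielding the desired $k$-synchronous execution.

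For the if direction, I would go the other way: from a $k$-synchronous execution $e'\in\synchExec{\mathcal{S}}{k}$ containing an unmatched send and reaching a final configuration $(\vec{l},B)$, note that $e'$ (or a suitable permutation of it) is an asynchronous execution, since the asynchronous semantics is strictly more permissive, and use Lemma~\ref{lem:undist} once more to land in a configuration $(\vec{l},\vec{b})$ with the same final local state vector; the unmatched send forces $\vec{b}\neq\vec{\epsilon}$, so this is an orphan message configuration reachable under the asynchronous semantics.

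The main obstacle is the step in the only-if direction where I claim that the permutation supplied by $k$-synchronizability turns unmatched sends into unmatched sends. This relies on the fact that the equivalence underlying $k$-synchronizability preserves the happens-before relation, hence also the send/receive matching relation: a send has a matching receive in $e$ precisely when it does in $e'$. Once that invariance is spelled out, the remainder is a routine adaptation of the argument for Theorem~\ref{th:deadlock1}.
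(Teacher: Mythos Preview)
Your proposal is correct and matches the paper's own treatment: the paper does not give a standalone proof but simply remarks that the argument is the same as for Theorem~\ref{th:deadlock1}, with the sole addition that an asynchronous execution reaching a configuration with non-empty buffers corresponds to a synchronous execution containing an unmatched send. Your write-up spells out exactly this adaptation, including the observation that the trace equivalence underlying $k$-synchronizability preserves the send/receive matching relation, which is precisely the ``only addition'' the paper alludes to.
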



A local state $l$ of a process $p$ is called \emph{$V$-receiving} when it is receiving and the set of messages that can be received in $l$ is exactly $V$, i.e., for all $v$,  $v\in V$ iff there exists $l'\in \<Lsts>_p$ such that $(l,\reca{p,v},l')\in \delta_p$. A configuration $(\vec{l},B)$ of the synchronous semantics is called \emph{$(p,V)$-receiving} when $\vec{l}_p$ is $V$-receiving. 
Given an execution $e$, let $\mathit{minUnmatched}(e,p)$ be the set of unmatched send actions in $e$ which are minimal in the causal relation of $tr(e)$ among unmatched send actions with destination $p$, i.e., $\mathit{minUnmatched}(e,p)$ is the set of unmatched send actions $\send{i}{p',p,v}$ in $e$ such that for every other unmatched send action $\send{j}{p'',p,v}$ in $e$ we have that $\send{i}{p'',p,v}\not\leadsto_{tr(e)}\send{i}{p',p,v}$.

\begin{theorem}
A $k$-synchronizable system $\mathcal{S}$ reaches an unspecified reception configuration under the asynchronous semantics iff there exists some $p\in \<Pids>$ such that the $k$-synchronous semantics of $\mathcal{S}$ admits an execution $e$ to a $(p,V)$-receiving state and 
\begin{align*}
\{v: \exists \send{i}{p',p,v}\in \mathit{minUnmatched}(e,p)\}\setminus V \neq \emptyset.
\end{align*}
\end{theorem}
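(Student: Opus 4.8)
The plan is to mirror the structure of the proof of Theorem~\ref{th:deadlock1}, adapting it to the more delicate predicate that characterizes unspecified receptions. I would again prove only the only-if direction, the converse being symmetric. So suppose the $k$-synchronizable system $\mathcal{S}$ reaches, under the asynchronous semantics, an unspecified reception configuration $c=(\vec{l},\vec{b})$ via an execution $e\in\asynchExec{\mathcal{S}}$. By definition of unspecified reception there is some $p\in\<Pids>$ such that $\vec{l}_p$ is receiving; let $V$ be the set of values receivable in $\vec{l}_p$, so that $\vec{l}_p$ is $V$-receiving, and the unspecified reception condition says $\vec{b}_p\notin v\<Val>^*$ for every $v\in V$ with $(\vec{l}_p,\reca{p,v},l')\in\delta_p$. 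Since $\vec{b}_p$ is by hypothesis nonempty (otherwise $p$ could not be ``prevented'' — more precisely the head of $\vec{b}_p$ carries a tag $v\notin V$), the head message of $\vec{b}_p$ is an unmatched send in $e$ with destination $p$ carrying some value $v_0\notin V$.

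First I would apply $k$-synchronizability: there is a permutation $e'$ of $e$ with $e'\in\synchExec{\mathcal{S}}{k}$, and by Lemma~\ref{lem:undist} $e'$ is an execution to a configuration $(\vec{l},B)$ with the same local state vector $\vec{l}$, so $(\vec{l},B)$ is $(p,V)$-receiving. It remains to show that the set $\{v:\exists\,\send{i}{p',p,v}\in\mathit{minUnmatched}(e',p)\}\setminus V$ is nonempty; equivalently, that some value carried by a causally-minimal unmatched send to $p$ lies outside $V$. The natural candidate is the send $s_0$ that produced the head of $\vec{b}_p$ in $e$, which carries $v_0\notin V$. The key point to establish is that $s_0$ is causally minimal among unmatched sends to $p$: in the asynchronous semantics, the contents of buffer $\vec{b}_p$ are exactly the unmatched sends to $p$ listed in the order they were executed, and the happens-before (causal) order $\leadsto_{tr(e)}$ restricted to sends with the same destination refines this buffer order (FIFO). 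Hence the head of $\vec{b}_p$, namely $s_0$, is $\leadsto$-minimal among all unmatched sends to $p$, so $s_0\in\mathit{minUnmatched}(e,p)$. Since $e'$ is a permutation of $e$ with the same happens-before relation (that is the content of $k$-synchronizability / the equivalence used in Lemma~\ref{lem:undist}), the set of unmatched sends and the causal order among them are preserved, so $s_0\in\mathit{minUnmatched}(e',p)$ as well, and it witnesses $v_0\in\{v:\exists\,\send{i}{p',p,v}\in\mathit{minUnmatched}(e',p)\}\setminus V$.

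For the converse direction, I would start from an $e\in\synchExec{\mathcal{S}}{k}$ to a $(p,V)$-receiving configuration with some causally-minimal unmatched send to $p$ carrying $v_0\notin V$, lift $e$ to an asynchronous execution $e''$ (every $k$-synchronous execution is in particular asynchronous, modulo reordering of receives — again using Lemma~\ref{lem:undist} in the other direction), and then append a schedule that drains all unmatched sends to $p$ that causally precede nothing, so that the buffer $\vec{b}_p$ eventually has $s_0$'s message at its head; since $v_0\notin V$ and $\vec{l}_p$ is $V$-receiving, $p$ is blocked and we obtain an unspecified reception configuration. The subtlety here is that draining the ``earlier'' unmatched sends from $\vec{b}_p$ is only possible if none of them is receivable either — but any unmatched send strictly before $s_0$ in buffer order would contradict $s_0$ being causally minimal, and among the causally-minimal unmatched sends we just pick the one realized first in $e$; I would need to check that one can always reorder so that a minimal-with-value-outside-$V$ send reaches the buffer head, which is where the argument needs the most care.

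The main obstacle is precisely the bookkeeping around $\mathit{minUnmatched}$: making rigorous the claim that ``FIFO buffer order at the blocked process coincides, on unmatched sends to that process, with a linearization of the causal order'' and that this property is stable under the happens-before–preserving permutation supplied by $k$-synchronizability. Everything else is a routine transfer through Lemma~\ref{lem:undist}, exactly as in Theorem~\ref{th:deadlock1}.
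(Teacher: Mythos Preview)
Your only-if direction is essentially the paper's argument: identify the head of $\vec{b}_p$ as an unmatched send $s_0$ with value $v_0\notin V$, observe it is first among unmatched sends to $p$ hence lies in $\mathit{minUnmatched}(e,p)$, permute via $k$-synchronizability, and use that the trace (and hence $\mathit{minUnmatched}$) is preserved.

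Your if direction, however, has a genuine gap. The ``append a schedule that drains earlier unmatched sends to $p$'' idea cannot work: receiving any message at $p$ changes $\vec{l}_p$, so after draining, $p$ need no longer be $V$-receiving and the unspecified-reception predicate is lost. Your attempted repair also fails. The claim ``any unmatched send strictly before $s_0$ in buffer order would contradict $s_0$ being causally minimal'' is false: two unmatched sends to $p$ from distinct senders can be causally incomparable and yet one precedes the other in buffer order, so both are causally minimal. And ``pick the one realized first in $e$'' does not help, because the hypothesis only guarantees that \emph{some} element of $\mathit{minUnmatched}(e,p)$ carries a value outside $V$; the first-realized minimal send may well carry a value in $V$.

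The paper's if direction avoids all of this by doing exactly what you flag as ``needing the most care'' but do not carry out: it never extends the execution, it only \emph{permutes} it. Given the chosen $\send{i}{p',p,v}\in\mathit{minUnmatched}(e,p)$ with $v\notin V$, causal minimality among unmatched sends to $p$ means no other unmatched send to $p$ is forced before it, so there is a trace-equivalent permutation $e'$ in which this send is the \emph{first} unmatched send with destination $p$. In the asynchronous semantics $e'$ then reaches a configuration with the same $\vec{l}$ and with $v$ at the head of $\vec{b}_p$; since $v\notin V$, this is an unspecified reception. No draining, no appended receives, just reordering.
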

\begin{proof}
For the only-if direction, let $e$ be an execution in $\asynchExec{\mathcal{S}}$ to an unspecified reception configuration $(\vec{l},\vec{b})$. Then, there exists $p\in\<Pids>$ such that $\vec{l}_p$ is $V$-receiving, for some $V\in\<Vals>$, and $v_p\not\in V$, where $v_p$ is the head of $\vec{b}_p$ (the first message to be dequeued). Therefore, $e$ contains an unmatched send action $\send{i}{p',p,v_p}$ which is also the first among unmatched send actions with destination $p$ (otherwise, $v_p$ would not be the first message in the buffer of $p$). Therefore, $\send{i}{p',p,v_p}\in \mathit{minUnmatched}(e,p)$. By $k$-synchronizability, there exists a permutation $e'$ of $e$ that belongs to $\synchExec{\mathcal{S}}{k}$. By Lemma~\ref{lem:undist}, $e'$ is an execution to a configuration $(\vec{l},B)$, for some $B$, which is $(p,V)$-receiving. Since $e$ and $e'$ have the same trace, we get that $\send{i}{p',p,v_p}\in \mathit{minUnmatched}(e',p)$ as well, which finishes the proof of this direction.

For the if direction, assume that the $k$-synchronous semantics of $\mathcal{S}$ admits an execution $e$ as above. Let $\send{i}{p',p,v}$ be an action in $\mathit{minUnmatched}(e,p)$ such that $v\not\in V$. Because $\send{i}{p',p,v}$ is minimal among unmatched send actions with destination $p$, there exists a permutation $e'$ of $e$ where it is the first unmatched send with destination $p$. As a direct consequence of the definitions, we get that $e$ is an execution to an unspecified reception configuration.
\end{proof}

\section{Proofs of Section~\ref{sec:characterizations}}\label{asec:characterizations}

\begin{theorem}\label{lem:cg_k}
A trace $t$ satisfying causal delivery is $k$-synchronous if{f} every cycle in its conflict-graph is good and of size at most $k$.
\end{theorem}
\begin{proof}
For the only if direction, $t$ is the trace of an execution $e\in \synchExec{\mathcal{S}}{k}$ for some system $\mathcal{S}$. The execution $e$ is obtained through a sequence of \textsc{$k$-exchange} transitions. The set of actions of every node $v$ of $CG_t$ appear in the label of a single such transition. Moreover, for every cycle in $CG_t$, the actions corresponding to the nodes in this cycle occur in the label of a single \textsc{$k$-exchange} transition. Therefore, every cycle in $CG_t$ is good and of size at most $k$.

For the if direction, we first show that any strongly-connected component $C$ of $CG_t$ is $k$-synchronous. Since all the cycles in $CG_t$ are of size at most $k$, we get that $C$ contains at most $k$ nodes. The case of strongly-connected components formed of a single node $v$ is trivial. The actions corresponding to $v$ are either a matching pair of send/receive actions, which can be generated through a \textsc{$1$-exchange} transition, or an unmatched send, which can also be generated through a \textsc{$1$-exchange} transition. Now, let $C$ be formed of a set of nodes 
$v_1$,$\ldots$, $v_n$, for some $2\leq n\leq k$ such that $s_i\in act(v_i)$, for all $1\leq i\leq n$. 
W.l.o.g., assume that the indexing of the nodes in $C$ is consistent with the edges labeled by SS (note that there is no cycle formed only of edges labeled by SS), i.e., for every $1\leq i_1<i_2\leq n$, $C$ doesn't contain an edge labeled by SS from $i_2$ to $i_1$, and for every $1\leq i_1<i_2<i_3\leq n$, if $\<proc>(s_{i_1})=\<proc>(s_{i_3})$, then $\<proc>(s_{i_1})=\<proc>(s_{i_2})$. Let $i_1$,$\ldots$,$i_m$ be the maximal subsequence of $1$,$\ldots$,$n$ such that $r_{i_j}\in act(v_i)$ for every $1\leq j\leq m$. 
We have that $C$ is the trace of the execution $e=s_1\cdot\ldots\cdot s_n\cdot r_{i_1}\cdot\ldots\cdot r_{i_m}$. The fact that all sends can be executed before the receives is a consequence of the fact that $C$ doesn't contain edges labeled by $RS$. Then, the order between receives is consistent with the one between sends because $C$ satisfies causal delivery. By definition, $e$ is the label of an \textsc{$n$-exchange} transition, and therefore, $C$ is $k$-synchronous. 

To complete the proof we proceed by induction on the number of strongly connected components of $CG_t$. The base case is trivial. 
 For the induction step, assume that the claim holds for every trace whose conflict-graph has at most $n$ strongly connected components, and 
 let $t$ be a trace with $n+1$ strongly connected components.  
 Let $C$ be a strongly connected component of $t$ such that
$C$ has no outgoing edges towards another strongly connected component of $t$.
%
 By the definition of the conflict-graph, $t$ is the trace of an execution $e=e'\cdot e''$ where $e''$ contains all the actions corresponding to the nodes of $C$. 
 We have shown above that $e''$ is $k$-synchronous, and by the induction hypothesis, $e'$ is also $k$-synchronous. Therefore, $e$ is $k$-synchronous.
\end{proof}

\section{Proofs of Section~\ref{sec:verif}}\label{asec:verif}

\subsection{Borderline Synchronizability Violations}

\begin{lemma}
Let $e$ be a borderline violation to $k$-synchronizability of $\mathcal{S}$. Then, $e = e'\cdot r$ for some $e'\in (S_{id}\cup R_{id})^*$ and $r\in R_{id}$.
\end{lemma}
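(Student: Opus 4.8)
The plan is to analyze the structure of a borderline violation to $k$-synchronizability and argue that its last action must be a receive. First I would recall the relevant definitions: a trace (or execution) is a violation to $k$-synchronizability if it is not equivalent to any $k$-synchronous execution, equivalently (by Theorem~\ref{lem:cg_k}) if its conflict-graph contains a cycle that is either bad or of size greater than $k$. A \emph{borderline} violation should be a minimal one, namely an execution $e$ such that $e$ is a violation but every strict prefix of $e$ is not — i.e., removing the last action yields an execution whose trace is $k$-synchronizable. So the goal is to show that this last action cannot be a send.

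The key step is a proof by contradiction: suppose $e = e'' \cdot s$ where $s$ is a send action. I would then argue that the ``bad or too-long'' cycle witnessing the violation, guaranteed by Theorem~\ref{lem:cg_k}, cannot actually involve the node containing $s$ — or if it does, a cycle of the same nature already exists in $e''$. The intuition is twofold. If $s$ is unmatched in $e$, then the node $v_s$ for $s$ has no incoming/outgoing $RS$, $RR$, or $SR$ edges touching a receive side of $s$ (there is no receive), so its only edges are $SS$ edges; but $SS$ edges alone cannot close a cycle (as noted in the proof of Theorem~\ref{lem:cg_k}, there is no cycle formed only of $SS$-edges), and moreover a final $SS$-edge into $v_s$ means $v_s$ is a sink-like node that cannot lie on any cycle. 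Hence the offending cycle lies entirely within $CG_{tr(e'')}$, contradicting minimality. If instead $s$ were matched in $e$, that contradicts the assumption that $s$ is the \emph{last} action, since the matching receive would have to come after it.

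The main obstacle I anticipate is handling the conflict-graph edge bookkeeping carefully: I must verify that adding a single unmatched send $s$ to the end of an execution adds only the node $v_s$ together with edges of the form $SS$ directed \emph{into} $v_s$ (from earlier sends to the same destination) and possibly program-order $SS$ edges from the same process — in every case, edges whose orientation prevents $v_s$ from participating in a cycle. Then I can invoke the fact that deleting a node that lies on no cycle preserves the set of cycles, so $CG_{tr(e)}$ and $CG_{tr(e'')}$ have the same cycles, whence $e''$ is already a violation, contradicting borderline-ness. Once this is established, the only remaining possibility for the last action of $e$ is a receive $r \in R_{id}$, giving $e = e' \cdot r$ as claimed. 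I would close by remarking that $e'$ need not itself be a violation (indeed it is not, by minimality), but it is the desired decomposition.
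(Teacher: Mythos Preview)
Your approach is correct and essentially the same as the paper's: assume for contradiction that the last action is a send $s$, observe that the conflict-graph node for $s$ has no outgoing edges (since $s$ is the last action and hence unmatched), conclude that every cycle of $CG_{tr(e)}$ already lies in $CG_{tr(e')}$, contradicting borderline-ness. The paper dispatches this in three lines without the matched/unmatched case split or the digression about $SS$-only cycles; note also that the incoming edges to $v_s$ need not all be labeled $SS$ (an $RS$ edge arises when the preceding action on $\<proc>(s)$ is a receive), though this does not affect your argument since the point is simply that all such edges are incoming.
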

\begin{proof}
Assume by contradiction that $e=e'\cdot s$ for some $e'\in (S_{id}\cup R_{id})^*$ and $s\in S_{id}$. By definition, $CG_{tr(e)}$ contains no outgoing
edge from the node representing $s$, which implies that any cycle of $CG_{tr(e)}$ is already contained in $CG_{tr(e')}$. This is a contradiction to 
the fact that $e$ is borderline.
\end{proof}

\begin{lemma}\label{lem:critical}
Let $e = e'\cdot r$, for some $e'\in (S_{id}\cup R_{id})^*$ and $r\in R_{id}$, be a borderline violation to $k$-synchronizability of $\mathcal{S}$. 
Then, the node $v$ of $CG_{tr(e)}$ representing $r$ (and the corresponding send) is a critical node of every cycle of 
$CG_{tr(e)}$ which is bad or of size bigger than $k$. 
\end{lemma}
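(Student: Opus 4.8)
The plan is to compare the conflict graph $CG_{tr(e)}$ with $CG_{tr(e')}$, exploiting two facts: that $r$ is the last action of $e$, hence maximal in the happens-before order of $tr(e)$, and that, $e$ being a borderline violation, $e'$ is not a violation of $k$-synchronizability. The first step is to establish that $CG_{tr(e)}$ is obtained from $CG_{tr(e')}$ by adding \emph{only} edges that enter $v$ and are labeled $SR$ or $RR$. Passing from $tr(e')$ to $tr(e)$ merely appends the receive $r$, turning $v$ from an unmatched node (carrying only its send action) into a matched one (also carrying $r$); the relative order of any two actions already occurring in $e'$ is unchanged, so no edge of $CG_{tr(e')}$ is lost, and any edge present in $CG_{tr(e)}$ but not in $CG_{tr(e')}$ must have $r$ as one of its two conflicting actions. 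Since no action of $tr(e)$ follows $r$, the receive component of $v$ can only be the target of a conflict edge, never its source; hence every new edge enters $v$ and is labeled $SR$ or $RR$.

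The second step uses that $e'$ is not a violation: recalling that a trace is equivalent to a $k$-synchronous one exactly when every cycle of its conflict graph is good and of size at most $k$ (Theorem~\ref{lem:cg_k} together with the fact that equivalent traces have the same conflict graph), we get that every cycle of $CG_{tr(e')}$ is good and of size at most $k$. Now let $\mathcal{C}$ be any cycle of $CG_{tr(e)}$ that is bad or of size greater than $k$. If $\mathcal{C}$ used no new edge, then all of its edges, with their labels, would already occur in $CG_{tr(e')}$, so $\mathcal{C}$ would be a bad or oversized cycle of $CG_{tr(e')}$ --- a contradiction. Hence $\mathcal{C}$ uses at least one new edge; as every new edge enters $v$ with label $SR$ or $RR$, the node $v$ lies on $\mathcal{C}$ and $\mathcal{C}$ contains an edge entering $v$ labeled $SR$ or $RR$, which is precisely what it means for $v$ to be a critical node of $\mathcal{C}$.

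I expect the only delicate point to be the edge bookkeeping in the first step: one has to check carefully that appending $r$ at the very end of the trace neither deletes an existing conflict edge nor creates any edge other than incoming edges of $v$, and in particular that it creates no outgoing edge of $v$ --- which is exactly where the maximality of $r$ in $tr(e)$ is used. Everything after that is a routine combination with Theorem~\ref{lem:cg_k} and the definition of a borderline violation; note that the argument treats the ``bad'' and the ``size greater than $k$'' cases uniformly, invoking the classification of $\mathcal{C}$ only to reach the contradiction inside $CG_{tr(e')}$.
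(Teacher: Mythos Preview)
Your proposal is correct and follows essentially the same approach as the paper's proof: both arguments hinge on the fact that $e'$ has no bad or oversized cycle, that $r$ being last forces every outgoing edge of $v$ to be $SX$, and that therefore the incoming edge to $v$ on any bad or oversized cycle must be $SR$ or $RR$. The only cosmetic difference is that you package this as an explicit comparison of $CG_{tr(e')}$ with $CG_{tr(e)}$ (``the new edges are exactly the incoming $XR$ edges of $v$''), whereas the paper argues directly on a fixed cycle, first showing $v$ lies on it and then ruling out a $YS$-labeled incoming edge; the underlying contradiction with $e'$ being a non-violation is identical.
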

\begin{proof}
Let $v_0,v_1,\ldots,v_n,v_0$ be a cycle of $CG_{tr(e)}$ which is bad or of size bigger than $k$. We first show that $v=v_i$ for some $0\leq i\leq n$. 
Assume by contradiction that this is not the case. Then, the execution $e'$ is already a violation to $k$-synchronizability which violates the assumption that $e$ is borderline.

For the following, w.l.o.g., we assume that $v=v_0$. Since $r$ is the last action of $e$, the only outgoing edge of $v$ is an edge labeled by $SX$ with $X\in \{S,R\}$. Therefore $(v,v_1)$ is an $SX$ edge. 
Assuming by contradiction that the edge $(v_n,v)$ is labeled by $YS$ for some $Y\in \{S,R\}$ implies that $e'$ is already a $k$-synchronizability violation, which contradicts the hypothesis.
\end{proof}

\subsection{Simulating Borderline Violations on the Synchronous Semantics}

We define a system $\mathcal{S'}$ which ``delays'' the reception of exactly one message, which is chosen nondeterministically,
by redirecting it to an additional process $\pi$ which relays it to the original destination at a later time. We show that
the synchronous semantics of $\mathcal{S'}$ ``simulates'' a permutation of every borderline violation of 
$\mathcal{S}$. 

Formally, given $\mathcal{S}=((\<Lsts>_p,\delta_p,l_p^0)\mid p\in\<Pids>)$, we define $\mathcal{S'}=((\<Lsts>_p,\delta'_p,l_p^0)|p\in\<Pids>\cup\{\pi\})$ where
\begin{itemize}
	\item every send of a process $p$ can be non-deterministically redirected to the process $\pi$ (the message payload contains the destination process), i.e., 
	\begin{align*}
	&\delta'_p(l,\senda{p,\pi,(q,v)}) = \delta'_p(l,\senda{p,q,v})\mbox{, and} \\ 
	&\delta'_p(l,a)=\delta_p(l,a)\mbox{ for all $p\in\<Pids>$, $l\in \<Lsts>_p$, and $a\not\in \{\senda{p,\pi,v}| p\in\<Pids>, v\in\<Vals>\}$}
	\end{align*}
	\item the process $\pi$ stores the received message in its state and relays it later, i.e., $\<Lsts>_\pi=\{l_\pi^0,l_f\}\cup\{(q,v)\mid q\in\<Pids>, v\in\<Vals>\}$, and
	for all $q\in\<Pids>$ and $v\in\<Vals>$, 
	\begin{align*}
	&\delta'_p(l_\pi^0,\reca{\pi,(q,v)}) = \{(q,v)\} \mbox{ and }
	\delta'_p((q,v),\senda{\pi,q,v})=l_f
	\end{align*}	
\end{itemize}

\begin{lemma}
Let $e=e_1\cdot \send{i}{p,q,v}\cdot e_2\cdot \rec{i}{q,v}$ be a borderline violation to $k$-synchronizability of $\mathcal{S}$. Then, $\synchExec{\mathcal{S'}}{k}$ contains an execution $e'$ of the form: 
\begin{align*}
e'=e_1'\cdot \send{i}{p,\pi,(q,v)}\cdot \rec{i}{\pi,(q,v)}\cdot e_2'\cdot \send{j}{\pi,q,v}\cdot \rec{j}{q,v}
\end{align*}
such that $e_1'\cdot \send{i}{p,q,v} \cdot e_2'$ is a permutation of $e_1\cdot \send{i}{p,q,v}\cdot e_2$.
\end{lemma}
\begin{proof}
A direct consequence of the definition of $\mathcal{S'}$ is that $e'\in\asynchExec{\mathcal{S'}}$. We show that the trace of $e'$ is $k$-synchronous. The conflict graph of $tr(e')$ can be obtained from the one of $tr(e)$ as follows:
\begin{itemize}
	\item the node $v$ representing the pair of actions $\{\send{i}{p,q,v},\rec{i}{q,v}\}$ is replaced by two nodes $v'$ and $v''$ representing $\{\send{i}{p,\pi,(q,v)}, \rec{i}{\pi,(q,v)}\}$ and $\{\send{j}{\pi,q,v}, \rec{j}{q,v}\}$, respectively,
	\item for every $SX$ edge from $v$ to a node $w$ in $CG_{tr(e)}$, where $X\in\{S,R\}$, there exists an $SX$ edge from $v'$ to $w$ in $CG_{tr(e')}$,
	\item $v'$ is connected to $v''$ by an $RS$ edge,
	\item there is no outgoing edge from $v''$.
\end{itemize}
Since all the cycles of $CG_{tr(e)}$ that are bad or exceed the size $k$ pass trough $v$, we get that $CG_{tr(e')}$ contains no such cycle.
Therefore, $tr(e')$ is $k$-synchronous. 

This implies that $\synchExec{\mathcal{S'}}{k}$ contains a permutation of $e_1\cdot \send{i}{p,\pi,(q,v)}\cdot \rec{i}{\pi,(q,v)}\cdot e_2\cdot \send{j}{\pi,q,v}\cdot \rec{j}{q,v}$. Since there is no outgoing edge from $v''$, there exists such a permutation that ends in $\send{j}{\pi,q,v}\cdot \rec{j}{q,v}$ which concludes the proof.
\end{proof}

\subsection{Excluding Executions Violating Causal Delivery}\label{asec:causal-delivery}

\begin{figure}[t]
\begin{center}
\centering
\begin{lstlisting}
function cone: $2^{\<Pids>}$
function receiver: $\<Pids>\cup \{\bot\}$

rule $s_1\cdot\ldots\cdot s_n\cdot r_1\cdot\ldots\cdot r_m$:
  if ( $\exists i.\ \<proc>(s_i)=\pi$ )
    pass
  if ( $\exists i.\ s_i=\send{\_}{p,\pi,(q,v)}$ )
    cone := $\{p\}$
    receiver := $q$
  forall j with $1\leq j\leq n$
    if ( $p\in \mbox{cone} \land \exists k.\ s_j\match r_k\land (\exists i.\ \<dest>(s_i) = \pi\implies (\<proc>(s_i)=\<proc>(s_j)\land i < j ))$ )
      cone := $\mbox{cone} \cup \{\<dest>(s_j)\}$
      assert $\<dest>(s_j) \neq \mbox{receiver}$
\end{lstlisting}
\end{center}
%
%
\caption{The monitor $\mathcal{M}_{\mathit{causal}}$. Initially, {\tt receiver} is $\bot$, and ${\tt cone}=\emptyset$.}
\label{fig:mon_causal}
\end{figure}

The monitor $\mathcal{M}_{\mathit{causal}}$ is essentially a labeled transition system whose transitions are labeled by sequences of actions in $S_{id}^*\cdot R_{id}^*$ corresponding to $k$-exchange transitions of the synchronous semantics. For readability, we define it as an abstract state machine in Figure~\ref{fig:mon_causal}. $\mathcal{M}_{\mathit{causal}}$ tracks the set of processes ${\tt cone}$ who perform a send that is causally after the send to $\pi$, and checks that the message they sent is not received by the same process to whom $\pi$ must send a message. It performs this check before $\pi$ sends a message and therefore, any failure will correspond to an execution which is not possible in $\mathcal{S}$ (violating causal delivery). 

The set of executions of the $k$-synchronous semantics of $\mathcal{S'}$ for which $\mathcal{M}_{\mathit{causal}}$ doesn't go to an error state (the assertion in $\mathcal{M}_{\mathit{causal}}$ passes at every step) is denoted by $\mathcal{S}_k' \paral\mathcal{M}_{\mathit{causal}}$. The following result shows that every such execution is correct with respect to the asynchronous semantics of $\mathcal{S}$.

\begin{lemma}
For every execution $e\in \mathcal{S}_k' \paral\mathcal{M}_{\mathit{causal}}$, we have that $\sigma(e)\in \asynchExec{\mathcal{S}}$.
\end{lemma}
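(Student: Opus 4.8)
The statement to prove is that for every execution $e$ in $\mathcal{S}_k' \paral \mathcal{M}_{\mathit{causal}}$, applying the "un-redirect" map $\sigma$ (which replaces the detour $\send{i}{p,\pi,(q,v)}\cdot \rec{i}{\pi,(q,v)}\cdots \send{j}{\pi,q,v}\cdot \rec{j}{q,v}$ by the direct pair $\send{i}{p,q,v}\cdot \rec{i}{q,v}$, and is the identity on executions that use no detour at all) yields an execution of the original system $\mathcal{S}$ under the asynchronous semantics. The plan is to argue in two stages: first that $\sigma(e)$ is a legal \emph{control-flow} run of $\mathcal{S}$, i.e. respects each process's transition relation $\delta_p$; and second that $\sigma(e)$ satisfies the buffer discipline of the asynchronous semantics — in particular causal delivery — which is exactly where the monitor $\mathcal{M}_{\mathit{causal}}$ earns its keep.

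First I would handle the case where $e$ uses no redirection to $\pi$: then by construction of $\mathcal{S'}$ the process $\pi$ never fires, every other process uses only transitions of $\delta_p$ (since $\delta_p'$ and $\delta_p$ agree outside the redirected sends, and the redirected sends are the only way $\pi$ gets a message), so $e$ is already an execution of $\mathcal{S}$ under the $k$-synchronous — hence asynchronous — semantics, and $\sigma(e) = e$. For the main case, let $e$ contain the redirected send $\send{i}{p,\pi,(q,v)}$, its matching receive by $\pi$, and the later relay $\send{j}{\pi,q,v}\cdot \rec{j}{q,v}$ (the monitor's first {\tt if} guarantees $\pi$'s actions all belong to one such single detour, since otherwise {\tt cone} would be non-trivially reset / the guard $\exists i.\ \<proc>(s_i) = \pi$ makes the rule {\tt pass}). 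Then I'd check that in $\sigma(e)$ each process $p\neq\pi$ still follows $\delta_p$: the sender $p$ took the transition $\delta_p'(l,\senda{p,\pi,(q,v)}) = \delta_p(l,\senda{p,q,v})$, which is by definition a $\delta_p$-transition for $\senda{p,q,v}$; the receiver $q$ took the $\delta_q$-transition for $\reca{q,v}$; and nothing else changes. Projecting onto local states, $\sigma(e)$ drives every process through exactly the same sequence of local states as $e$ does (minus $\pi$, which is erased), so control flow is respected.

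The delicate part is showing that $\sigma(e)$ obeys the asynchronous buffer semantics, i.e. that there is a run of configurations $c_0 \xrightarrow{\ } \cdots$ matching $\sigma(e)$, equivalently (by the characterisation results and Lemma~\ref{lem:undist}) that $tr(\sigma(e))$ satisfies causal delivery. In $e$, the detour splits the causal link $p \to q$ into $p \to \pi \to q$, and the only new danger is that some action in $\sigma(e)$ which is causally after $\send{i}{p,q,v}$ — i.e. performed by a process that entered $p$'s causal cone after the original send — itself sends a message to $q$ that, in $\sigma(e)$, would have to be delivered to $q$'s buffer \emph{before} message $i$ arrives, even though causally it should arrive after. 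This is precisely the bad pattern that $\mathcal{M}_{\mathit{causal}}$ forbids: it maintains {\tt cone} as the set of processes causally downstream of the send-to-$\pi$ (seeding it with $\{p\}$ and extending it along matched sends $s_j$ whose sender is already in the cone and which are not themselves redirected to $\pi$ / are ordered consistently), records {\tt receiver} $= q$, and asserts $\<dest>(s_j) \neq {\tt receiver}$ for every such downstream matched send. So I would argue: since $e$ passes all these assertions, no process in $p$'s causal cone ever sends a matched message to $q$ except via the relay itself; hence when we collapse $\pi$ away, the receive $\rec{i}{q,v}$ in $\sigma(e)$ can be scheduled at the buffer position of the relay receive without any competing message of $q$ from the cone overtaking it, and all other buffers are unaffected. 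Therefore $tr(\sigma(e))$ satisfies causal delivery and, being a control-flow run of $\mathcal{S}$, corresponds to a genuine execution in $\asynchExec{\mathcal{S}}$.

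The main obstacle I expect is the bookkeeping in that last step: making precise the claim that collapsing the $\pi$-detour preserves a valid FIFO ordering of $q$'s buffer, i.e. that the monitor's syntactic condition ($\<dest>(s_j)\neq{\tt receiver}$ for all cone-downstream matched sends, together with the guard tying cone-membership to being causally after the send to $\pi$ and the ordering side-condition $\<proc>(s_i)=\<proc>(s_j)\land i<j$) really does rule out every way an uncollapsed ordering could become infeasible. Concretely, one must show the monitor's {\tt cone} over-approximates (soundly, for this purpose) the set of sends in $tr(e)$ that are $\leadsto$-after $\send{i}{p,\pi,(q,v)}$, and that the single offending case for causal delivery after collapsing is exactly a cone-member sending to $q$. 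I would do this by induction on the $k$-exchange transitions of $e$, tracking the invariant "{\tt cone} $=$ set of processes with an action causally after the redirected send, and no such process has yet sent a matched message to {\tt receiver}," and then reading off causal delivery of $\sigma(e)$ from this invariant.
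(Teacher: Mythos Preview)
The paper states this lemma without proof, so there is no paper argument to compare against; your plan must stand on its own. The two-stage decomposition (control-flow correctness of each $\delta_p$, then buffer/causal-delivery correctness) is the right shape, and your treatment of the no-redirection case and of the local-state projections is fine.

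There is, however, a gap in the buffer-discipline step that you should close. You argue that the only way FIFO can fail in $\sigma(e)$ is when a process \emph{in the causal cone} of the redirected send sends a matched message to $q$, and that $\mathcal{M}_{\mathit{causal}}$ forbids exactly this. But the asynchronous semantics here uses \emph{per-destination mailbox} buffers: every send to $q$ is enqueued in the single buffer $\vec b_q$, in the linear order of the execution. In $\sigma(e)$ the collapsed send $\send{i}{p,q,v}$ sits at the early position while its receive sits at the late (relay) position, so any matched send $s'$ to $q$ that occurs \emph{linearly} between them --- even if $\<proc>(s')$ is not in the causal cone --- would put its message behind $(i,v)$ in $\vec b_q$ yet have its receive occur before $\rec{i}{q,v}$, breaking FIFO. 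The monitor does not assert on such non-cone sends, so your invariant ``no cone member has sent a matched message to \texttt{receiver}'' is not by itself enough to conclude that the specific sequence $\sigma(e)$ is a valid asynchronous run.

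To repair this you need one more ingredient: since such an $s'$ is \emph{concurrent} with the redirected send (its sender is outside the cone), it can be commuted in $\sigma(e)$ to occur before $\send{i}{p,q,v}$; equivalently, you are really proving that $tr(\sigma(e))$ satisfies causal delivery and then invoking the fact that a causal-delivery trace has a representative in $\asynchExec{\mathcal{S}}$. Make this explicit --- either argue that $\sigma$ is defined up to such permutations, or strengthen the conclusion to ``some permutation of $\sigma(e)$ lies in $\asynchExec{\mathcal{S}}$'' and check that this suffices for the downstream use of the lemma. Without that step, your inductive invariant on \texttt{cone} does not reach the stated conclusion.
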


The reverse of the lemma above is also true, modulo permutations.

\begin{lemma}
For every borderline violation $e\in \asynchExec{\mathcal{S}}$ to $k$-synchronizability, there exists an execution $e'\in \mathcal{S}_k' \paral\mathcal{M}_{\mathit{causal}}$, such that $\sigma(e')$ is a permutation of $e$.
\end{lemma}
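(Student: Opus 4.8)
The goal is the converse of the previous lemma: every borderline violation $e$ of the asynchronous semantics of $\mathcal{S}$ can be realized, up to permutation, as an execution of $\mathcal{S}_k' \paral \mathcal{M}_{\mathit{causal}}$. The plan is to chain together the three ingredients already established in this subsection. First, since $e$ is borderline, the lemma on the shape of borderline violations gives us $e = e_1\cdot\send{i}{p,q,v}\cdot e_2\cdot\rec{i}{q,v}$ with the final receive being the critical node of every bad or oversized cycle (Lemma~\ref{lem:critical}). Second, the simulation lemma for $\mathcal{S'}$ gives an execution $e'\in\synchExec{\mathcal{S'}}{k}$ of the form $e_1'\cdot\send{i}{p,\pi,(q,v)}\cdot\rec{i}{\pi,(q,v)}\cdot e_2'\cdot\send{j}{\pi,q,v}\cdot\rec{j}{q,v}$ whose projection $e_1'\cdot\send{i}{p,q,v}\cdot e_2'$ is a permutation of $e_1\cdot\send{i}{p,q,v}\cdot e_2$; equivalently, $\sigma(e')$ is a permutation of $e$. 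So it only remains to argue that this particular $e'$ (or a suitable permutation of it that is still in $\synchExec{\mathcal{S'}}{k}$) is not rejected by $\mathcal{M}_{\mathit{causal}}$.

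For the monitor part, I would argue by contradiction: suppose $\mathcal{M}_{\mathit{causal}}$ fails on $e'$ at some step. Unwinding the abstract state machine in Figure~\ref{fig:mon_causal}, the variable $\mathtt{receiver}$ is set to $q$ at the $k$-exchange containing $\send{i}{p,\pi,(q,v)}$, and $\mathtt{cone}$ is grown to contain exactly those processes whose sends are causally after that send in $tr(e')$ — more precisely, processes reached by following matched SS/RS chains emanating from $p$ while respecting the ordering constraint encoded by the $i<j$ guard (which ensures the send to $\pi$ is treated as the earliest relevant send of $p$). A failure of the assertion means some process in $\mathtt{cone}$ sends a message received by $q$, i.e., $tr(e')$ contains a causal path from $\send{i}{p,\pi,(q,v)}$ to some $\send{}{p'',q,v''}$ that is matched by a receive on $q$. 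I would then translate this back to $\sigma(e')$: in $\sigma(e')$, the send $\send{i}{p,q,v}$ would be causally before $\send{}{p'',q,v''}$, yet the matching receive $\rec{i}{q,v}$ (being the last action of $\sigma(e')=e$, up to permutation) would necessarily be ordered after $\rec{}{q,v''}$ on process $q$. That is precisely a violation of causal delivery on channel into $q$ — contradicting the fact that $\sigma(e')$ is a permutation of $e\in\asynchExec{\mathcal{S}}$, and the asynchronous semantics (FIFO buffers) guarantees causal delivery.

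The one subtlety to handle carefully is that $\mathcal{M}_{\mathit{causal}}$ makes its check \emph{at the moment $\pi$ would relay}, and it only tracks the cone approximately via the single $k$-exchange labels; so I would need to check that the cone computed by the monitor is an over-approximation of the true causal cone of $\send{i}{p,\pi,(q,v)}$ restricted to sends whose receives have already occurred, and simultaneously that it is precise enough — i.e., it does not flag a process whose send is not actually causally after the redirected send. Concretely: every process added to $\mathtt{cone}$ by the \texttt{forall} loop sends something matched within the current prefix and lies on an SS/RS chain from $p$, so its send is genuinely causally after $\send{i}{p,\pi,(q,v)}$ in $tr(\sigma(e'))$; hence a monitor failure is a genuine causal-delivery violation, as argued above. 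Conversely no genuine violation in $e$ is missed because $e$ itself satisfies causal delivery, so the assertion simply never fires. I also need to dispatch the trivial guards at the top of Figure~\ref{fig:mon_causal}: in $e'$ the process $\pi$ does appear, but the first \texttt{if} (\texttt{pass} when $\<proc>(s_i)=\pi$) only short-circuits $k$-exchanges that are purely $\pi$-internal, and the relay $\send{j}{\pi,q,v}\cdot\rec{j}{q,v}$ is exactly the step after which the assertion has already been discharged, so it is consistent with the monitor's intended bookkeeping.

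\textbf{Main obstacle.} I expect the crux to be the bookkeeping argument that the monitor's $\mathtt{cone}$ faithfully captures the causal cone of the redirected message across $k$-exchange boundaries — in particular justifying the $i<j$ guard and the ``$\<dest>(s_i)=\pi \implies \<proc>(s_i)=\<proc>(s_j)$'' condition, which together encode that within the exchange that redirects to $\pi$ we only propagate along $p$'s own subsequent sends. Getting the directions of the happens-before edges exactly right (SS vs.\ RS vs.\ RR) when pulling the contradiction back to $\asynchExec{\mathcal{S}}$, and confirming that the last-action-is-$\rec{i}{q,v}$ property of the borderline violation is what forces the causal-delivery contradiction, is where the real work lies; the rest is a routine composition of the preceding three lemmas.
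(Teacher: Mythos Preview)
The paper states this lemma without proof; there is no argument in the paper to compare against. Your high-level plan---invoke the simulation lemma to obtain $e'\in\synchExec{\mathcal{S'}}{k}$ with $\sigma(e')$ a permutation of $e$, then argue that $\mathcal{M}_{\mathit{causal}}$ cannot reject $e'$ because a rejection would translate back to a causal-delivery violation in $e$---is the natural one and is correct in outline.

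There is, however, a concrete gap in the monitor step that your ``main obstacle'' paragraph gestures at but does not resolve. You claim that ``every process added to $\mathtt{cone}$ \ldots\ lies on an SS/RS chain from $p$, so its send is genuinely causally after $\send{i}{p,\pi,(q,v)}$.'' This is not true for an arbitrary $e'$ produced by the simulation lemma. Within a single $k$-exchange $s_1\cdots s_n\cdot r_1\cdots r_m$, the monitor processes sends in the order $s_1,\ldots,s_n$ and grows $\mathtt{cone}$ on the fly; if a process $p'$ is added to $\mathtt{cone}$ because some $s_{j'}$ is a matched send from a cone process to $p'$, and $p'$ itself has a later send $s_j$ (with $j>j'$) in the same exchange, the monitor will treat $s_j$ as a cone send. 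But in the trace, $s_j$ occurs \emph{before} $p'$'s receive of $s_{j'}$ (sends precede receives in an exchange), so $s_j$ is \emph{not} causally after the redirected send. Hence $\mathcal{M}_{\mathit{causal}}$ can over-approximate and reject a perfectly legitimate $e'$.

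The fix is exactly the ``suitable permutation'' you allude to: you must argue that, for each exchange, the sends can be reordered (consistently with per-process program order) so that processes not in $\mathtt{cone}$ at the start of the exchange have all their sends processed \emph{before} any cone process's send that targets them. Concretely, placing all sends by processes outside the start-of-exchange cone first, and sends by cone processes last, achieves this and is always admissible because the two groups involve disjoint processes. With that ordering, the monitor's $\mathtt{cone}$ coincides with the true causal cone at each exchange boundary, and your contradiction argument (a flagged matched send to $q$ would force a receive on $q$ after $\rec{i}{q,v}$, impossible since $\rec{i}{q,v}$ is last in $e$) goes through. Without making this reordering explicit, the proof as written has a hole.
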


\subsection{Detecting Synchronizability Violations}\label{asec:violations}

\begin{figure}
\begin{minipage}{6.5cm}
\begin{lstlisting}
function conflict: $\<Pids>\cup\{\bot\}$
function lastIsRec: $\mathbb{B}$
function sawRS: $\mathbb{B}$
function count: $\mathbb{N}$

rule $\send{i}{p,\pi,(q,v)}\cdot \rec{i}{\pi,(q,v)}$:
  conflict := $p$
  count := k

//for every $i$, $\<dest>(s_i)\neq \pi$ and $\<proc>(s_i)\neq \pi$
rule $s_1\cdot\ldots\cdot s_n\cdot r_1\cdot\ldots\cdot r_m$:
  for i = $1$ to $n$
    if ( * $\land$ $\exists j.\ s_i \match r_j \land \mbox{conflict} \in \{\<proc>(s_i),\<dest>(s_i)\} $ )
      if ( * )
        conflict := $\<proc>(s_i)$
        if ($\mbox{lastIsRec}$)
          sawRS = true
        lastIsRec := false
      else 
        conflict := $\<dest>(s_i)$
        lastIsRec := true
      count--
    if ( * $\land$ $\<proc>(s_i)=\mbox{conflict}\land \forall j.\ \neg s_i\match r_j$ )
      count--
      lastIsRec := false

rule $\send{i}{\pi,q,v}\cdot \rec{i}{q,v}$:
  assert $\mbox{conflict} =q \implies (\mbox{count} > 0 \land \neg \mbox{sawRS})$
\end{lstlisting}
\end{minipage}
\hspace{5mm}
\begin{minipage}{5.5cm}
\includegraphics[width=5.5cm]{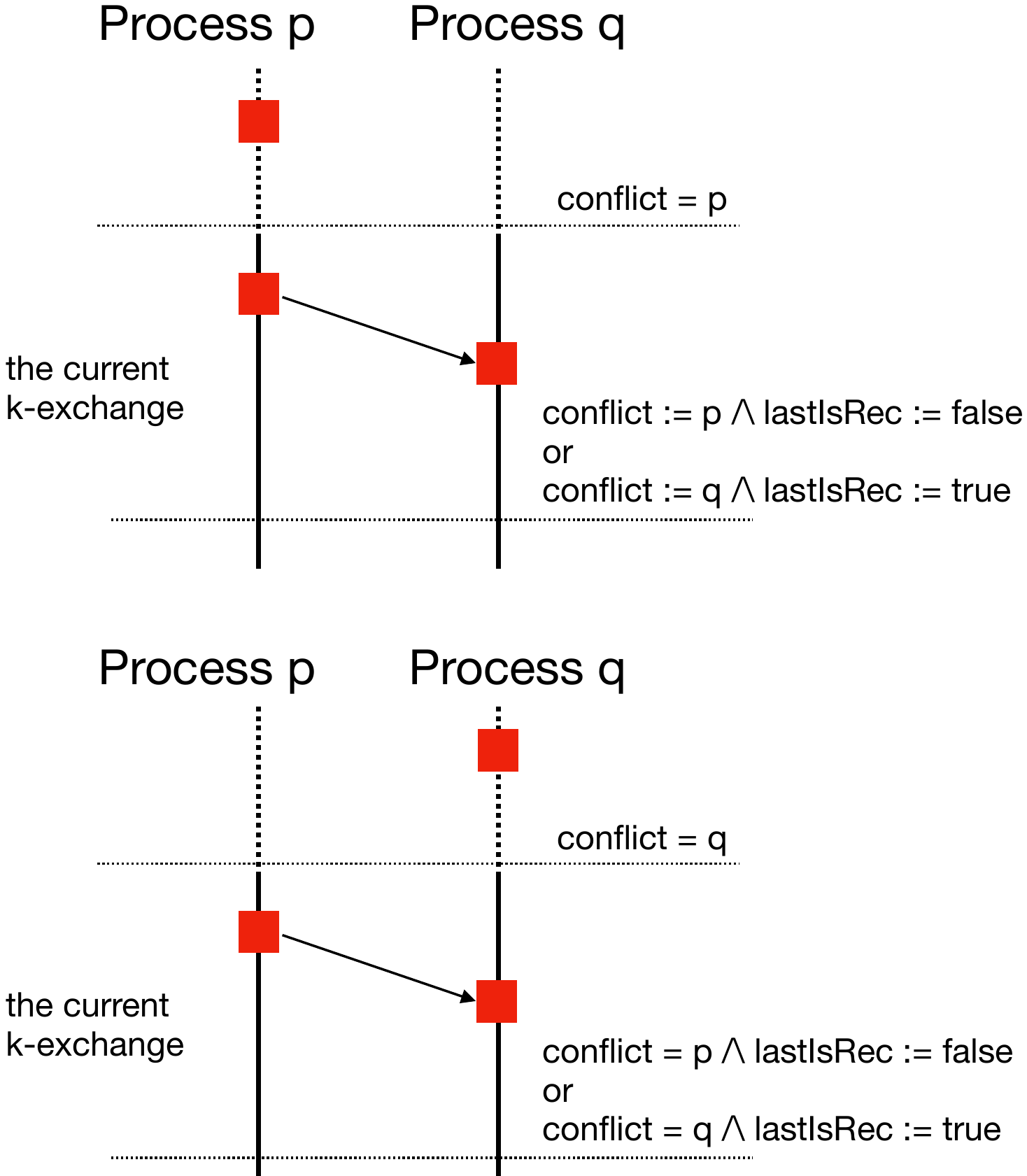}
\end{minipage}
\caption{The monitor $\mathcal{M}_{\mathit{viol}}(k)$. We use $\mathbb{B}$ to denote the set of Boolean values and $\mathbb{N}$ the set of natural numbers. Initially, {\tt conflict} and {\tt receiver} are $\bot$, while {\tt lastIsRec} and {\tt sawRS} are {\tt false}.}
\label{fig:mon_viol}
\end{figure}

Figure~\ref{fig:mon_viol} lists the definition of $\mathcal{M}_{\mathit{viol}}(k)$ as an abstract state machine. By the construction of $\mathcal{S'}$, we assume w.l.o.g., that both the send to $\pi$ and the send from $\pi$ are executed in isolation as an instance of $1$-exchange.
When observing the send to $\pi$, the monitor updates the variable ${\tt conflict}$, which in general stores the process executing the last action in the cycle, to $p$. Also, a variable ${\tt count}$, which becomes $0$ when the cycle has strictly more than $k$ nodes, is initialized to $k$. 

Then, for every $k$-exchange transition in the execution, $\mathcal{M}_{\mathit{viol}}(k)$ non-deterministically picks pairs of matching send/receive or unmatched sends to continue the conflict-graph path, knowing that the last node represents an action of the process stored in ${\tt conflict}$. The rules for choosing pairs of matching send/receive to advance the conflict-graph path are pictured on the right of Figure~\ref{fig:mon_viol} (advancing the conflict-graph path with an unmatched send doesn't modify the value of ${\tt conflict}$, it just decrements the value of ${\tt count}$). In principle, there are two cases depending on whether the last node in the path conflicts with the send or the receive of the considered pair. One of the two process involved in this pair of send/receive equals the current value of ${\tt conflict}$. Therefore, ${\tt conflict}$ can either remain unchanged or change to the value of the other process. The variable {\tt lastIsRec} records whether the current conflict-graph path ends in a conflict due to a receive action. If it is the case, and the next conflict is between this receive and a send, then {\tt sawRS} is set to {\tt true} to record the fact that the path contains an $RS$ labeled edge (leading to a potential bad cycle).

When $\pi$ sends its message to $q$, the monitor checks whether the conflict-graph path it discovered ends in a node representing an action of $q$. If this is the case, this path together with the node representing the delayed send forms a cycle. Then, if ${\tt sawRS}$ is ${\tt true}$, then the cycle is bad and if ${\tt count}$ reached the value $0$, then the cycle contains more than $k$ nodes. In both cases, the current execution is a violation to $k$-synchronizability.

\section{Proofs of Section~\ref{sec:decidability}}\label{asec:decidability}

\begin{theorem}\label{th:dec1}
For a finite-state system $\mathcal{S}$, the reachability problem under the $k$-synchronous semantics is decidable and PSPACE-complete.
\end{theorem}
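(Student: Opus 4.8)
The plan is to prove the two halves of the statement separately: membership in PSPACE by presenting the $k$-synchronous semantics as a reachability question in an exponentially large graph whose vertices have polynomial-size descriptions, and PSPACE-hardness by a reduction from a classical PSPACE-complete problem, for instance the non-emptiness of the intersection of $n$ deterministic finite automata.

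\emph{Upper bound.} First I would record that every execution of the $k$-synchronous semantics is a concatenation of \textsc{$k$-exchange} transitions, and that a single such transition $s_1\cdots s_n\cdot r_1\cdots r_m$ has $n\le\min(k,|\<Pids>|)$ — one send per distinct sender — and $m\le n$; hence it is a word of length at most $2|\<Pids>|$, and one can check in polynomial time whether it is a legal transition from a configuration $c_1$ to a configuration $c_2$ (replay the local moves of every process, check the $S_{id}^*R_{id}^*$ shape of the block, the $\match$ relation between its sends and receives, the absence of an $RS$ conflict inside the block, and causal delivery). The step that needs care is to show that the configurations reachable under the $k$-synchronous semantics admit polynomial-size descriptions. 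A configuration is a pair $(\vec l, B)$ where $\vec l$ already has a description polynomial in $\mathcal S$, while $B$ — the record of unmatched sends — is a priori unbounded; but for the reachability question a bounded abstraction of $B$ is enough. It suffices to maintain, for each process $p$, the set $O_p\subseteq\<Pids>$ of processes $q$ such that the causal past of $p$'s current local history contains an unmatched send whose destination is $q$: this family $(O_p)_{p\in\<Pids>}$ is a relation on $\<Pids>\times\<Pids>$, hence of polynomial size; it is updated by inserting $q$ into $O_p$ when $p$ performs an unmatched send to $q$ and by making a receiver inherit the sender's set on a matched receive; and it is precisely the information needed to decide, when a receive by some $q$ is fired inside an exchange, whether causal delivery remains satisfied. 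With polynomial-size vertices and a polynomial-time edge test, reachability of a target local state vector (or configuration) is a reachability query in a graph of exponential size, solvable nondeterministically in polynomial space, and hence in PSPACE by Savitch's theorem; in particular it is decidable.

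\emph{Lower bound.} I would reduce from the non-emptiness of $\bigcap_{i=1}^{n} L(A_i)$ for deterministic finite automata $A_1,\dots,A_n$ over an alphabet $\Sigma$. Construct a system with a sequential driver process $p_0$ and one process $p_i$ per automaton. $p_0$ repeatedly picks a letter $a\in\Sigma$ and, for $i=1,\dots,n$ in turn, sends $a$ to $p_i$ and waits for an acknowledgement; $p_i$, on receiving $a$, moves its simulated $A_i$-state along $a$ and acknowledges. At any point $p_0$ may instead broadcast (again sequentially, each time waiting for an acknowledgement) a message $\mathit{done}$; $p_i$, on receiving $\mathit{done}$, enters a sink state $\mathit{acc}_i$ iff its current $A_i$-state is accepting (and a different dead sink otherwise). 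The target is the local state vector with $p_0$ in its final state and every $p_i$ in $\mathit{acc}_i$. Since $p_0$ is sequential, every send and every acknowledgement constitutes a single-send exchange, so the $k$-synchronous semantics of this system behaves identically for every $k\ge 1$. The target is reachable under the $k$-synchronous semantics iff some word is accepted by all the $A_i$: a matched execution simulating such a word reaches it, and conversely any execution reaching the target must deliver every acknowledgement — otherwise $p_0$ is blocked — and every $\mathit{done}$ message, so it faithfully simulates one common accepted word, delivered in order to each $p_i$ by FIFO. This gives PSPACE-hardness, and together with the upper bound, PSPACE-completeness.

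\emph{Main obstacle.} The delicate point is the configuration-bounding step of the upper bound: one must pin down exactly which information about unmatched messages the $k$-synchronous semantics has to carry across phases in order to enforce causal delivery, and verify that this information is of polynomial size (the set-valued matrix $(O_p)_p$ above, or an equivalent). Once that is settled, the Savitch argument, the choice of reduction, and the correctness of the reduction are routine.
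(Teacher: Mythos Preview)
Your approach matches the paper's: for the upper bound, replace the unbounded $B$ component by a finite-state abstraction and appeal to PSPACE reachability in the resulting product of finite systems; for hardness, reduce from intersection (product) emptiness of finite automata. The paper's own proof is a two-line sketch saying exactly this (``the evolution of the $B$ component \ldots can be modeled using an additional labeled transition system that is composed with the processes in the system''), and your per-process family $(O_p)_p$ is one concrete realisation of that extra automaton; your driver-with-acknowledgements gadget is a minor variant of the token-passing reduction the paper uses for the companion hardness result on $k$-synchronizability.

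One factual slip to fix: a $k$-exchange does \emph{not} require the senders to be pairwise distinct. The paper explicitly allows several consecutive sends by the same process within one phase --- see the ordering clause ``if $\<proc>(s_{i_1})=\<proc>(s_{i_3})$ then $\<proc>(s_{i_1})=\<proc>(s_{i_2})$'' in the proof of Theorem~\ref{lem:cg_k}, and the way the monitors in Figures~\ref{fig:mon_causal} and~\ref{fig:mon_viol} iterate over $s_1,\ldots,s_n$ without assuming distinct $\<proc>(s_i)$. Hence your bound $n\le|\<Pids>|$ and the ``length at most $2|\<Pids>|$'' claim are wrong; only $n\le k$ holds. This does not damage the PSPACE argument: a single exchange can still be replayed send-by-send and receive-by-receive in polynomial space, carrying the local-state vector, your $(O_p)_p$ relation, and a counter bounded by $k$.
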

\begin{proof}
A consequence of the fact that the product emptiness problem (checking if the product of a set of finite state automata has an empty language) is PSPACE-complete~\cite{DBLP:conf/focs/Kozen77}. The evolution of the $B$ component of a synchronous configuration and the set of messages sent during a $k$-exchange transition can be modeled using an additional labeled transition system that is composed with the processes in the system. 
\end{proof}

\begin{theorem}
The problem of checking $k$-synchronizability of a finite-state system $\mathcal{S}$ is decidable and PSPACE-complete.
\end{theorem}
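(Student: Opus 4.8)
The plan is to establish PSPACE membership and PSPACE-hardness separately.

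For PSPACE-hardness, I would give a reduction from the reachability problem under the $k$-synchronous semantics, which is PSPACE-hard by Theorem~\ref{th:dec1}. Given a finite-state system $\mathcal{S}$ and a target local-state vector, I would construct a system $\hat{\mathcal{S}}$ that behaves like $\mathcal{S}$ until the target is reached and then (and only then) deliberately performs a canonical ``bad'' message exchange --- for instance a cycle of $k+1$ nodes --- so that $\hat{\mathcal{S}}$ is $k$-synchronizable if and only if $\mathcal{S}$ never reaches the target. The only subtlety is ensuring the ``bad gadget'' is reachable only via the target and that it really does produce a non-$k$-synchronous trace according to Theorem~\ref{lem:cg_k}; this is a routine but careful construction.

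For PSPACE membership, the strategy is to decide the \emph{complement} problem, i.e.\ nondeterministically detect a violation, and appeal to Savitch's theorem ($\mathrm{NPSPACE}=\mathrm{PSPACE}$). By the borderline-violation analysis in Section~\ref{sec:verif} (Lemmas~\ref{lem:critical} and the simulation lemmas of Subsections~\ref{asec:causal-delivery} and~\ref{asec:violations}), $\mathcal{S}$ is \emph{not} $k$-synchronizable if and only if the $k$-synchronous semantics of the instrumented system $\mathcal{S}'$ composed with the monitors $\mathcal{M}_{\mathit{causal}}$ and $\mathcal{M}_{\mathit{viol}}(k)$ admits an execution along which $\mathcal{M}_{\mathit{causal}}$ does not fail and $\mathcal{M}_{\mathit{viol}}(k)$'s final assertion is violated. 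I would then argue that this composite system is still finite-state: $\mathcal{S}'$ adds only the process $\pi$ whose state space is polynomial in $\mathcal{S}$; $\mathcal{M}_{\mathit{causal}}$ stores a subset of $\<Pids>$ and one process identifier; and $\mathcal{M}_{\mathit{viol}}(k)$ stores one process identifier, two Booleans, and a counter bounded by $k$. Moreover, a single $k$-exchange transition sends at most $k$ messages, so the $B$-component of a synchronous configuration and the per-exchange bookkeeping can be tracked in polynomial space exactly as in the proof of Theorem~\ref{th:dec1}. Hence the reachability of the ``violation detected'' state in this composite system is an instance of the problem shown PSPACE-complete in Theorem~\ref{th:dec1}, giving a PSPACE (indeed NPSPACE, then PSPACE) decision procedure for non-$k$-synchronizability, and therefore for $k$-synchronizability.

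The main obstacle is the membership direction: I must make sure the reduction to Theorem~\ref{th:dec1} is legitimate, i.e.\ that everything added on top of $\mathcal{S}$ --- the relay process $\pi$, the two monitors, and the automaton encoding the $B$-component and the set of messages sent during each $k$-exchange --- has size polynomial in $|\mathcal{S}|$ and in the (unary-encoded) bound $k$, and that the correctness equivalences from Section~\ref{sec:verif} indeed characterize $k$-synchronizability violations exactly (no spurious or missed violations once the monitors are composed). Given those lemmas, the space accounting is straightforward; stating it cleanly is the bulk of the work.
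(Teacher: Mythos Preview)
Your membership argument is essentially the paper's: invoke Theorem~\ref{th:main-verif} (the monitor construction) to reduce non-$k$-synchronizability to a reachability question under the $k$-synchronous semantics of the instrumented system, and then apply Theorem~\ref{th:dec1}. The paper says this in one sentence; your more explicit space accounting for $\pi$, $\mathcal{M}_{\mathit{causal}}$, $\mathcal{M}_{\mathit{viol}}(k)$, and the $B$-component is correct and is precisely what that sentence unpacks to.

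Your hardness reduction, however, has a gap. You reduce from $k$-synchronous reachability: given $(\mathcal{S},\vec{l})$, build $\hat{\mathcal{S}}$ that fires a bad gadget only after reaching $\vec{l}$, and claim that $\hat{\mathcal{S}}$ is $k$-synchronizable iff $\mathcal{S}$ does not reach $\vec{l}$ under the $k$-synchronous semantics. The ``if'' direction fails for arbitrary $\mathcal{S}$: if $\mathcal{S}$ itself is not $k$-synchronizable, then $\hat{\mathcal{S}}$ inherits a non-$k$-synchronous asynchronous execution that never touches the gadget, so $\hat{\mathcal{S}}$ is non-$k$-synchronizable even when $\vec{l}$ is unreachable under the $k$-synchronous semantics. (There is a second, related issue: $\vec{l}$ might be reachable in $\mathcal{S}$ only via a non-$k$-synchronous asynchronous execution, again decoupling the two sides.) The fix is to restrict to instances $\mathcal{S}$ that are already $k$-synchronizable and to note that the PSPACE-hard instances produced in the proof of Theorem~\ref{th:dec1} have this property by construction. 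That is exactly what the paper does: it reduces \emph{directly} from product emptiness to $1$-synchronizability, building a system whose every execution is $1$-synchronous unless all component automata reach their final states, at which point a violation gadget becomes enabled. Your idea is the same in spirit; you just need to start one step earlier, at product emptiness, so that the base system is guaranteed $k$-synchronizable.
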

\begin{proof}
Theorem~\ref{th:main-verif} and Theorem~\ref{th:dec1} imply that the problem is in PSPACE. Moreover, PSPACE-hardness follows from the fact that the product emptiness problem can be reduced to checking $1$-synchronizability. Given a set of finite state automata $A_1$, $\ldots$, $A_n$, we define a message passing system $\mathcal{S}$ containing one process $p_i$ for each automaton $A_i$, which ``simulates'' the product. Essentially, $p_1$ is obtained from $A_1$ by rewriting every transition label $a$ to $\senda{p_1,p_2,a_1}\cdot \reca{p_1,a_n}$, the process $p_i$ with $1<i<n$ is obtained from $A_i$ by rewriting every transition label $a$ to $\reca{p_i,a_{i-1}}\cdot \senda{p_i,p_{i+1},a_i}$, and $p_n$ is obtained from $A_n$ by rewriting every transition label $a$ to $\reca{p_n,a_{n-1}}\cdot \senda{p_n,p_1,a_n}$. This rewriting ensures that every transition of the product of $A_1\times\ldots\times A_n$ is simulated precisely by a sequence of sends/receives:
\begin{align*}
\senda{p_1,p_2,a_1}\cdot \reca{p_2,a_1}\cdot \senda{p_2,p_3,a_2}\cdot\ldots \cdot \reca{p_n,a_{n-1}} \senda{p_n,p_1,a_n}\cdot \reca{p_1,a_n}
\end{align*}
Note that every execution admitted by this system is $1$-synchronous. Augmenting this system with new states and transitions to ensure that it produces a violation of $1$-synchronizability exactly when each process $p_i$ is in a final state of $A_i$, leads to a system which is \emph{not} $1$-synchronizable iff the product $A_1\times\ldots\times A_n$ has a non-empty language. Therefore, the product emptiness problem is polynomial-time reducible to checking $1$-synchronizability.
\end{proof}

\begin{theorem}
For a flow-bounded system $\mathcal{S}$, the problem of checking if there exists some $k$ such that $\mathcal{S}$ is $k$-synchronizable, is decidable.
\end{theorem}
\begin{proof}
First, assume that there exists an execution $e$ of $\mathcal{S}$ such that the corresponding conflict graph contains a bad cycle. Then, $\mathcal{S}$ is not $k$-synchronizable for $k = |e|$ (where $|e|$ denotes the length of $e$), and finding this $k$ through a procedure that checks $k$-synchronizability for increasing values of $k$ is clearly possible.

Now, assume that $\mathcal{S}$ admits no such execution. Then, every execution $e$ of $\mathcal{S}$ can be permuted to a $k$-synchronous execution $e'$, for some $k$ (the lack of conflict-graph cycles with an $RS$ labeled edge implies that the execution can be permuted to a sequence of $k$-exchange transition labels). Let $K$ be a constant such that every process in $\mathcal{S}$ is $K$-receive bounded and $K$-send bounded (this constant exists because $\mathcal{S}$ is flow-bounded). We get that the number of consecutive receives in $e'$ is bounded by $K\times |\<Pids>|$, and the number of consecutive sends by $K\times |\<Pids>|$. Otherwise, there would exist a process that performs more than $K$ consecutive receives or more than $K$ consecutive sends before a receive, which contradicts the definition of $K$. Therefore, $e'$ can be executed by a sequence of $K\times |\<Pids>|$-exchange transitions.
\end{proof}

\end{document}